\documentclass[copyright,creativecommons]{eptcs}

\usepackage{amsmath,amsthm,amssymb,latexsym}
\usepackage{graphicx}
\usepackage{mathtools}
\usepackage{xcolor} 
\usepackage{xspace} 
\usepackage{xargs} 
\usepackage{ebproof} 
\usepackage[capitalise]{cleveref}
\usepackage{thm-restate}

\theoremstyle{plain}
\newtheorem{theorem}{Theorem}
\newtheorem{lemma}[theorem]{Lemma}

\newtheorem{proposition}[theorem]{Proposition}

\newtheorem{app-theorem}{Theorem}[section]
\newtheorem{app-lemma}[app-theorem]{Lemma}
\newtheorem{app-proposition}[app-theorem]{Proposition}

\theoremstyle{definition}
\newtheorem{definition}[theorem]{Definition}

\usepackage{iftex}
\ifpdf
  \usepackage{underscore}         
  \usepackage[T1]{fontenc}        
\else
  \usepackage{breakurl}           
\fi

\NewDocumentCommand{\LM}{s}{\mathcal{L}_\mu\IfBooleanT{#1}{^-}}

\newcommand{\Var}{\mathsf{Var}}
\newcommand{\Prop}{\mathsf{Prop}}
\newcommand{\FL}{\mathsf{Clos}}
\NewDocumentCommand{\SC}{sO{\LM*}}{\Sigma\IfBooleanT{#1}{^c}[#2]}
\newcommand{\QF}[1][\LM*]{\mathsf{QF}[#1]} %
%
\renewcommand{\neg}[1]{\overline{#1}}
\newcommand{\conj}{{\textstyle\bigwedge}}
\newcommand{\disj}{{\textstyle\bigvee}}
\newcommand{\sq}{\mathop{\square}}
\newcommand{\dia}{\mathop{\diamondsuit}}
\newcommand{\Lit}{\mathsf{Lit}}

\NewDocumentCommand{\CO}{om}{\mathsf{CO}\IfValueT{#1}{_{#1}}(#2)}
\newcommand{\pred}[1]{\mathsf{p}(#1)}

\NewDocumentCommand{\setof}{mo}{\{ \, \IfValueTF{#2}{#1\mid #2 } {#1} \, \} }

\NewDocumentCommand{\Setof}{mo}{\bigl\{ \, \IfValueTF{#2}{#1\mid #2} {#1} \, \bigr\} }
\newcommand{\card}[1]{\lvert{#1}\rvert}


\newcommand{\lts}[1]{\mathcal{#1}}
\renewcommand{\S}{\lts{S}}

\newcommand{\den}[1]{\lVert{#1}\rVert}

\newcommand{\val}[1]{\mathcal{#1}}
\newcommand{\V}{\val{V}}

\let\phi\varphi

\title{The Limit of Recursion in State-based Systems\thanks{This work was supported by the Knut and Alice Wallenberg Foundation [2020.0199], Swedish Research Council [2017-05111] and Dutch Research Council [OCENW.M20.048]}}

\author{
Bahareh Afshari\qquad\qquad Giacomo Barlucchi \qquad\qquad Graham E.~Leigh
\institute{Department of Philosophy, Linguistics and Theory of Science\\ University of Gothenburg, Gothenburg, Sweden}
\email{\quad bahareh.afshari@gu.se\qquad\qquad giacomo.barlucchi@gu.se \qquad\qquad graham.leigh@gu.se}
}
%

\begin{document}
\maketitle
\begin{abstract}
 We prove that $\omega^2$ strictly bounds the iterations required for modal  definable functions to reach a fixed point across all countable structures. 
 The result corrects and extends the previously claimed result by the first and third authors on closure ordinals of the alternation-free $\mu$-calculus in \cite{AfshariL13}. 
 The new approach sees a reincarnation of Kozen's well-annotations, devised for showing the finite model property for the modal $\mu$-calculus. 
 We develop a theory of `conservative' well-annotations where minimality of annotations is guaranteed, and isolate parts of the structure that locally determine the closure ordinal of relevant formulas. 
 This adoption of well-annotations enables a direct and clear pumping process that rules out closure ordinals between $\omega^2$ and the limit of countability.
\end{abstract}
\section{Introduction}\label{introduction}
%
State-based systems and processes lay at the heart of computer science. Abstractly, they are no more than directed graphs, also known as Kripke frames, with states as vertices and state transitions as edges. Taking a transition from one state to another can model a step in computation and, doing so recursively, singles out computation paths through the system. To specify and verify properties of computation, in a fully abstract manner, temporal logics offer an elegant framework. 

Syntactically simple and algorithmically rich, temporal logics have been heavily studied.  Amongst them the modal $\mu$-calculus holds a special place, providing a level of abstraction that is mathematically appealing while computationally well-behaved. This logic not only subsumes well known temporal logics (LTL, CTL, PDL, as well as extensions such as $\mu$LTL and CTL*), it can also be enriched to capture, for example, probabilistic properties \cite{Nir15,LiuS15,Mio17}, hyperproperties \cite{GutsfeldMO21}, higher-dimensional properties \cite{Otto99} (see also \cite{Lange15}), and properties of higher-order recursive schemes \cite{Ong06,Ong21}.
In other words, $\mu$-calculus is a cornerstone in the mosaic of logics in computer science. 

Modal $\mu$-calculus is the extension of basic modal logic with least ($\mu$) and greatest ($\nu$) fixed point operators. Over a Kripke frame $\mathcal S$, the formula $\mu x\,\phi(x)$ is interpreted as the least fixed point of the induced monotone function $f\colon U\mapsto \phi(U)$ which maps a set of states $U$ (in $\mathcal S$) to the denotation of $\phi(x)$ modulo interpretation of $x$ as $U$. 
This fixed point can be obtained as the limit of transfinite iterations of $f$. Starting with the empty set, applications of $f$ give rise to an increasing sequence of sets of states, 
\[\emptyset\subseteq f(\emptyset)\subseteq f(f(\emptyset))\subseteq \cdots \subseteq f^\alpha(\emptyset)\subseteq f^{\alpha+1}(\emptyset)\subseteq\cdots\]
which necessarily stabilises at some ordinal: $f^{\kappa+1}(\emptyset)=f^\kappa(\emptyset)$.
The least such $\kappa$ is the closure ordinal of $f$ in $\S$.
One way to define a notion of \emph{closure ordinal} for a formula $\mu x \,\phi$ is as the supremum of closure ordinals of the induced function across all frames.

In this paper we study closure ordinals of  the $\Sigma$-fragment: formulas generated from closed $\mu$-calculus formulas and variables through the logical and modal operators, and the least fixed point operator $\mu$.
From an algebraic perspective the fragment corresponds to functions definable in the modal algebra with (definable) parameters. A more general class, amounting to the full calculus, is to admit arbitrary definable functions, including those defined through co-recursion.

There are several problems concerning closure ordinals which, to date, have only partial solutions. 
As there are countably many formulas of $\mu$-calculus there are countable ordinals that are not closure ordinals.
So, \emph{which} ordinals are closure ordinals? Aside from \emph{existence} is the question of \emph{decidability}: Is there an algorithm that decides whether any given formula has a closure ordinal?
And not least is the question of \emph{limitedness}: Can a non-trivial limit on closure ordinals be determined?
%
\subsection{Related work}
Closure ordinals have been considered only by a handful of authors.  
Most notable is the Czarnecki formulas \cite{Cz10}, simple formulas in the $\Sigma$-fragment demonstrating that every ordinal below $\omega^2$ is a closure ordinal. 
Czarnecki's formulas indicate a connection between syntactic and semantic complexity that was generalised in \cite{AfshariL13}: consider formulas of the form  $\mu x\,\phi$ with $\phi$ given by
\begin{gather*}
    \phi = (p_1\land\sq q_1\land\bigcirc_1 x)\lor(p_2\land\sq q_2\land\bigcirc_2 x)\lor\cdots\lor(p_{n}\land\sq q_n\land\bigcirc_n x)\lor\sq\bot
\end{gather*}
where $p_i$ and $q_i$ are conjunctions of literals and $\bigcirc_i\in\setof{\dia,\sq}$ for each $i$.
It is not difficult to prove that if such a formula has a closure ordinal $\alpha$, then $\alpha<\omega.(n+1)$. In \cite{AfshariL13}, the authors also provide a tableaux-based characterisation for the closure ordinals of the alternation-free $\mu$-calculus. In \cite{Kretz06}, Kretz proves that every valid $\Sigma_1$-formula in the one-variable fragment, which includes any valid formula of the form above, has finite closure ordinal.

Fontaine \cite{Fon08} (see also \cite{Fon10}) carries out a  study of closure ordinals of the continuous $\mu$-calculus, that is the fragment constituting formulas $\mu x\,\phi(x)$ where $\phi(x)$  is continuous with respect to $x$ in the Scott topology on the powerset algebra. It is shown the $\{\sq,\nu\}$-free fragment of $\mu$-calculus characterises the continuous $\mu$-calculus establishing that closure ordinals are obtained in at most $\omega$ iterations.
Fontaine and Venema provide syntactic characterisations of several other semantic properties in \cite{FV18}.
Gouveia and Santocanale~\cite{GS18} study $\kappa$-continuity for $\kappa$ an infinite (regular) cardinal, and prove a generalisation of the aforementioned results regarding existence: any ordinal obtainable from $0$, $1$, $\omega$, and $\omega_1$ (least uncountable ordinal) by the binary ordinal sum operation is the closure ordinal of a $\mu$-calculus formula.

One may argue that the concept of closure ordinal and questions posed about it stand somewhat remote to other investigations concerning $\mu$-calculus. But there is strong evidence that this not so. One intriguing connection, pointed out by Skrzypczak \cite{MS16}, is to the descriptive complexity of B\"uchi languages. Each B\"uchi automaton can be associated a rank below $\omega_1$, measuring the complexity of the automaton against input trees. It is shown that the rank of an automaton $\mathcal B$ is strictly below $\omega_1$ if and only if the language of $\mathcal B$ is Borel, and strictly below $\omega^2$ if and only if the language is weak monadic second order definable. Skrzypczak proposes that the pumping arguments central to deducing bounds on closure ordinals may be used to tackle questions of definability (and decidability) of non-deterministic languages, the so-called \emph{gap properties} for B\"uchi languages (see e.g.~\cite{SW16,NW03}).
 
Milanese \cite{Milanese18} studies closure ordinals over bidirectional models and shows that every ordinal below $\omega^\omega$ is a closure ordinal  (see also \cite{MilaneseV19}). This result was observed independently in  \cite{AJL19} as part of a study of $\omega$-branching proof systems for the two-way $\mu$-calculus. Again, such results add weight to the claim that closure ordinals are entwined in many topics concerning $\mu$-calculi.
\subsection{Contribution}
We prove that $\omega^2$ is a strict upper bound on the closure ordinals of the $\Sigma$-fragment of the modal $\mu$-calculus. This reproves and extends the claims in \cite{AfshariL13} concerning closure ordinals of the alternation-free fragment.
There are two critical errors in \cite{AfshariL13}, both in the original proof of the `pumping' Lemma 3.18.\footnote{The authors are indebted to Micha\l\ Skrzypczak and Igor Walukiewicz for identifying one of the errors in \cite{AfshariL13}.}
While the errors can be fixed (see unpublished notes \cite{AL16}) it is at the cost of a weaker result and a more technically involved argument that appeals, in particular, to the closure ordinals of valid $\Sigma_1$-formulas.

The approach presented here develops a theory of ordinal annotations that simplifies the conceptual framework compared to~\cite{AL16} and lays the groundwork for future extensions to the full calculus.
At its base is the notion of well-annotations, employed by Kozen to establish the finite model property for $\mu$-calculus~\cite{Koz88}.
We refine the concept by imposing constraints on the annotating ordinals so that the existence of such a `conservative' well-annotation corresponds to the existence of certain closure ordinals. 
The theory of well-annotations becomes more tractable by also restricting the underlying syntax.
Rather than the traditional syntax of the modal $\mu$-calculus, we consider formulas constructed via a single modal operator -- related to the `cover' modality of \cite{JW95} -- and present them as modal equation systems in conjunctive form. 
It is shown that both expressivity and bounds on closure ordinals are preserved through this syntactic preprocessing.
The central argument involves  a pumping lemma for well-annotations.
Assuming the existence of a sufficiently `large' conservative well-annotation, a transfinite series of substitutions shows it possible to obtain a conservative well-annotation corresponding in size to an arbitrary countable ordinal, thereby refuting the existence of closure ordinals  equal or greater than \( \omega^2 \). 

\section{Modal  \texorpdfstring{$\mu$}{mu}-calculus}\label{sec:mu}

We adopt a unimodal presentation of modal logic eschewing the usual unary modal operators \( \sq \) and \( \dia \) for a single modality \( \nabla \) that takes finitely many formulas as arguments.
The intended interpretation of \( \nabla \Gamma \) in terms of \( \sq \)/\( \dia \) syntax is \( \disj_{\gamma \in \Gamma} \sq \gamma \lor \dia\! \conj \Gamma \).
This modality is the classical dual of the ‘cover’ modality originally introduced by Janin and Walukiewicz~\cite{JW95} and has proved especially well suited for investigating the modal and co-algebraic logics~\cite{JW95,Wal00, Kupke12,KupkeP11}.
The formulas of the modal \( \mu \)-calculus, denoted \( \LM \), are those generated by the following grammar. 
\begin{align*}
	\phi &\coloneqq 
		p \mid 
		\neg{p} \mid 
		x \mid 
		\disj \Gamma \mid 
		\conj \Gamma \mid 
		\nabla \Gamma  \mid 
		\mu x\, \phi \mid 
		\nu x\, \phi
	\\
	\Gamma &\coloneqq 
		\emptyset \mid 
		\Gamma \cup \setof{ \phi }
\end{align*} 
where $x$ and $p$ range over, respectively, a set \( \Var \) of \emph{variables} and \( \Prop \) of \emph{propositional constants}.
Note, negation is not included as a logical connective, except for propositional constants, expressed by the atoms \( \neg p \) above. 
Non-variable atoms, namely propositional constants and their negations, are called \emph{literals}, the set of which is denoted \( \Lit \).

We utilise abbreviations 
$ \bot \coloneqq \disj \emptyset $ and $ \top \coloneqq \conj \emptyset $, and represent binary conjunction and disjunction via, respectively,
\( \phi_0 \land \phi_1 \coloneqq \conj \setof{ \phi_0 , \phi_1 }\) and \( \phi_0 \lor \phi_1  \coloneqq \disj \setof{ \phi_0 ,  \phi_1 } \).
With the intended interpretation of \( \nabla \), the two unary modalities $\sq$ and $\dia$ are  recovered by
\(
	\sq \phi \coloneqq \nabla\setof{ \phi,\bot } \) and \(
	\dia\! \phi \coloneqq \nabla\setof{ \phi } \land \nabla \emptyset
\).

Free and bound variables are defined per usual.
A formula with no free variable occurrences is called \emph{closed} and
we write  \( \LM* \) for the set of closed formulas.
For a set of formulas $F$, the \emph{quantifier-free formulas over $F$}, denoted $\QF[F]$, is  the closure of $ F \cup \Lit$ under the logical connectives and  $\nabla$.

Formulas are interpreted with respect to Kripke frames.	A \emph{frame} is a tuple $\S = (S,R,\Lambda)$ comprising a non-empty set $S$ of \emph{states}, a binary \emph{accessibility} relation $R \subseteq S\times S$ and a \emph{labelling} function $\Lambda \colon \Prop \to \mathcal P ( S )$ from propositional constants to sets of states. A frame is often identified with its set of states.
For a frame \( ( S , R , \Lambda) \) and $s\in S$, we write  $R [s]$ for  the set of successors of $s$, namely, $\setof{ t\in S }[ (s,t) \in R ]$.
Given a formula, a frame $\S = (S,R,\Lambda)$ and a valuation function $\V \colon \Var \to \mathcal{P}(S)$, the denotation of \( \phi \) in \( \S \) relative to \( \V \) is the set \( \den \phi ^\S _\V \) defined by
\begin{align*}
	\den x ^\S_\V &= 
		\V (x)
		& 
	\den { \conj \Gamma } ^\S_\V &= 
		\bigcap \Setof{ \den \gamma ^\S_\V }[ \gamma \in \Gamma ]
	&
	\den { \nu x\, \phi}^\S_\V &=
		\bigcup \Setof{
			U \subseteq S 
		}[
			U \subseteq \den \phi ^{\S}_{\V [x\mapsto U]} 
		]
	\\
	\den p ^\S_\V &= 
		\Lambda(p)
	&
	\den { \disj \Gamma } ^\S_\V &= 
		\bigcup \Setof{ \den \gamma ^\S_\V }[ \gamma \in \Gamma ]
	&
	\den {\mu x\, \phi} ^\S_\V &=
		\bigcap \Setof{ 
			U \subseteq S 
		}[
			\den \phi ^{\S}_{\V [x\mapsto U]} \subseteq U
		]
	\\
	\den {\neg p} ^\S_\V &=
		\mathrlap{
		S \setminus \Lambda(p)
		}
	&& &
	\den{ \nabla \Gamma }^\S_\V 
	&= \nabla \Setof{ \den{\phi}^\S_\V }[ \phi \in \Gamma ]
\end{align*}
where \( \V[x\mapsto U] \) expresses the valuation that maps \( x \) to \( U \) and otherwise agrees with \( \V \). The function \( \nabla \colon \mathcal{P}(\mathcal{P}(S)) \to \mathcal{P}(S) \) is specified by
\begin{equation*}
	\label{eqn-semantics-nabla}
	\nabla \mathcal U \coloneqq
	\Setof{ 
		v \in S 
	}[
	     R[v] \subseteq U \text{ for some } U \in \mathcal U
	]
	\cup 
	\Setof{ 
		v \in S
	}[
		\bigcap \mathcal U \cap R[v] \neq \emptyset
    ].
\end{equation*}

\subsection{Equational formulas}
\label{s-equation-systems}

We will be working with the {$\Sigma$-fragment} of $\LM$ which, loosely speaking, consists of formulas wherein the external $\mu$-quantifiers do not bind variables in the scope of other quantifiers.
More precisely, the \emph{$\Sigma$-fragment} is the closure of $\LM* \cup \Lit \cup \Var $ under the logical connectives,  $\nabla$-modality and the  \( \mu \)-quantification. We refer to formulas of the $\Sigma$-fragment as \emph{$\Sigma$-formulas}.

For the analysis we adopt a representation of $\Sigma$-formulas based on \emph{modal equation systems}.
A modal equation system (m.e.s.)\ consists of a finite set of {equations} between variables and quantifier-free formulas, accompanied by a `priority' order on variables. 
We spare the general definition of m.e.s.\ (for which the reader can consult, e.g., \cite[sec.~8.3.4]{DGL16}) and focus on a formulation corresponding to $\Sigma$-formulas.
In particular, in our set-up there is no order imposed on variables, and equations relate each variable to a  quantifier-free formula over $\LM*$ and variables, that is, $\QF[\LM* \cup \Var]$.
\begin{definition}\label{MES}
     An \emph{equation system} over \( \LM* \) is a tuple \( ( X , E ) \) where \( X \subseteq \Var \) is a finite set of variables and  \( E \colon X \to \QF[X\cup\LM*] \) is such that all variables in \( E(x) \) are in the scope of a modality.
    An \emph{equational formula} (over \( \LM* \)) is a triple \( ( X , x_0 , E ) \) where \( ( X , E ) \) is an equation system (over \( \LM* \)) and \( x_0 \in X \) is a distinguished variable called the \emph{initial variable}.
\end{definition} 
The intended semantics of an equational formula is the denotation of the initial variable relative to the system's  equations taken under a least fixed point reading.
The formal semantics is most easily given through \emph{approximations}:

\begin{definition}[Approximations]\label{a2}
Fix an equation system \( ( X , E ) \) and frame \( \S \).
For each ordinal \( \alpha \) define a valuation \( \V^\alpha \) by 
\[ 
	\V^\alpha (x) =
	\begin{cases}
		\bigcup_{\beta < \alpha} \den{ E(x) }^\S_{\V^\beta}, & x \in X,
		\\	
		\emptyset, &x \in \Var \setminus X.
	\end{cases}
\]
For each formula \( \psi \in \QF[ X \cup \LM* ]\), the \( \alpha \)-th \emph{approximation} of \( \psi \) (relative to \( (X,E ) \)), also referred to as the \emph{denotation} of $\psi^\alpha$, is \( \den { \psi^\alpha }^\S \coloneqq \den \psi ^\S_{\V^\alpha} \).
The denotation of the equational formula \( \phi = ( X , x , E ) \) in \( \S \) is defined as \( \den \phi^\S \coloneqq \bigcup_{\alpha<\omega_1} \den{ x^\alpha }^\S \).
\end{definition}
That every $\Sigma$-formula is equivalent to an equational formula over $\LM*$ can be shown via a simple translation between the two representations that replaces the `external' \( \mu \)-operators by equations and vice-versa.
Henceforth, we identify $\Sigma$-formulas and equational formulas.

We will utilise a special form of equational systems/formulas that facilitates the desired pumping argument while staying faithful to both expressivity and closure ordinals within the $\Sigma$-fragment. 
\begin{definition}[Conjunctive system]\label{conj-mes}
	An equation system \( ( X , E ) \) is said to be \emph{conjunctive} if for every $x\in X$, the formula $E(x)$ is  of the form
    \( \conj_{i<k} ( \disj \Gamma_{i} \lor \nabla Y_{i} )\)
    for some \( \Gamma_i \subseteq \LM* \) and \( Y_i \subseteq X \).
An equational formula over a conjunctive system is called a \emph{conjunctive formula}.
\end{definition}
One obvious constraint is that in the syntax above it is not possible to express \( \bot \) as the empty disjunction. 
Instead, \( \bot \) is expressed as the conjunctive equation \( z \mapsto \dia\! z \) where, recall, \( \dia\! z = \nabla \setof{ z } \land \nabla \emptyset \).
Note also that, in a conjunctive equation, the modal depth is trivial and a conjunct may contain at most one $\nabla$-modality. That the resulting fragment is as expressive as the $\Sigma$-fragment is essentially the dual of Janin and Walukiewicz' `disjunctive normal form' theorem~\cite{JW95}.
Less obvious is the preservation of closure ordinals which will be addressed in the next section (see \cref{t-conj-equiv}).
	
It is worth highlighting that what we have called `conjunctive' here is most correctly the `conjunctive $\Sigma$-fragment'. Since we only work with the $\Sigma$-fragment in this article we  opt for the shorter name convention.

We use the following adaptation of the standard  Fischer--Ladner closure of formulas~\cite{FischerLadner-79} to equation systems.
	The \emph{closure} of an equation system \( ( X , E ) \) is the smallest set \( \FL ({X,E}) \subseteq \LM \) satisfying 
 (1) $ E(X) \subseteq \FL ({X,E})$; 
 (2) if $\bigcirc \Gamma \in \FL ({X,E}) $ for $\bigcirc \in \{ \conj , \disj, \nabla\}$ then $ \Gamma \subseteq \FL ({X,E})$; and 
 (3) if $ \sigma x \, \psi \in \FL ({X,E}) $ for \( \sigma \in \{ \mu,\nu\} \) then $\psi(\sigma x\, \psi/x) \in \FL ({X,E}) $ where \( \psi(\chi/x) \) denotes the result of substituting \( \chi \) for free occurrences of \( x \) in \( \psi \), avoiding variable capture.
	The size of the equation system \( (X,E) \), written \(\card{X,E}\), is the cardinality of its closure.

\subsection{Closure ordinals}\label{sec:CO}

As remarked, an $\LM$-formula $\phi(x)$ considered over a frame $\S$ induces a monotone function
on the powerset lattice $(\mathcal P(S),\subseteq)$ mapping a set of states $U\subseteq \S$ to  
\(
\den { \phi }^\S_{\V[ x \mapsto U ]}
\). 
One may give an approximation semantics for \( \LM \) by iterating this function into the transfinite, where \( \Omega \) denotes the class of ordinals:
\begin{equation}
		\label{eq:mu-app}
    \den{\mu x\,\phi}^\S_\V = 
		\bigcup_{\kappa \in \Omega }\den{ \mu^\kappa x\, \phi }^\S_\V
    \quad\text{where}\quad
	    \den {\mu^\alpha x\, \phi }^\S_\V 
	    \coloneqq 
    \bigcup_{\beta<\alpha} \den \phi^\S_{\V[ x \mapsto \den {\mu^\beta x\,\phi}^\S_\V ]}
\end{equation}
The `formula' \( \mu^\kappa x\, \phi \) expresses the $\kappa$-iteration of the function \( f \) starting on \( \emptyset \).
In particular, for every $ s\in \den{\mu x \, \phi}^\S_\V$ there is some $\kappa \in \Omega $ s.t.\ $s\in \den{ \mu^{\kappa}x \, \phi }^\S_\V$.
If \( \S \) is a countable frame, cardinality considerations show that \( \den{\mu x\,\phi}^\S_\V = \den{ \mu^\kappa x\, \phi }^\S_\V \) for some \( \kappa < \omega_1 \).
Thus, for each closed \( \mu x\, \phi  \), there exists \( \kappa \le \omega_1 \) such that \( \den{\mu x\, \phi}^\S_\V = \den{\mu^\kappa x\, \phi }^\S_\V \) for \emph{every} countable frame \( \S \) and valuation \( \V \).

It is essentially these approximations that provide the semantics of equational formulas in the previous section though there is a noteworthy difference: in \cref{a2}  
the ordinal annotation adopted consists of a single ordinal number `counting' multiple variables.
Implicit in \eqref{eq:mu-app} is what is known as the  \emph{signature},
an $n$-tuple of ordinals keeping record of the iterations of each $\mu$ operator in \( \mu x \, \phi \).
In the context of an equation system \( ( X , E ) \) a signature is an assignment of an ordinal to each variable in \( X \). 
Assuming a fixed enumeration \( x_0 , \dotsc , x_n \) of \( X \), a signature is a sequence of ordinals \( \alpha_0 \dotsm \alpha_{n} \) and the denotation of quantifier-free formulas over \( X \) relative to this signature uses \( \alpha_i \) to interpret \( x_i \):
\( \den{ x_i ^{\alpha_0 \dotsm \alpha_n } }^\S = \bigcup_{\beta < \alpha_i } \den{ E(x_i)^{ \dotsm \alpha_{i-1} \beta  \alpha_{i+1} \dotsm } }^\S \)\!.
For a detailed definition and properties of signatures we refer the reader to~\cite{SE89,NW96,DGL16}.
The `single approximation' notion of denotation in \cref{a2}, which counts each and every unfolding of equations, may appear a crude measure in comparison to the fine-grained specification that treats each equation independently. 
Signatures, however, also introduce complications of a `book-keeping' nature while offering a level of detail that is not needed for characterising bounds on fixed point iterations as shown by the following lemma.
\begin{lemma}
	Let \( \alpha_0 , \dotsc, \alpha_n < \omega_1 \) and \( \alpha = \sum_{i} \alpha_i \).
	For every equational formula \( \psi \) with variables over \( x_0 , \dotsc, x_n \) and structure \( \S \), \( \den{ \psi^{\alpha_0 \dotsm \alpha_n} }^\S \subseteq \den{ \psi ^\alpha } ^\S \subseteq \den{\psi^{\alpha \dotsm \alpha}}^\S \). 
\end{lemma}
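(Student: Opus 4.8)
The plan is to prove the two inclusions separately, in each case first reducing to the situation where $\psi$ is a single variable. For the reduction, observe that if $\psi$ is a literal or a closed formula of $\LM*$ then its denotation does not depend on the annotation at all, so both inclusions hold with equality; and if $\psi$ is $\disj\Gamma$, $\conj\Gamma$ or $\nabla\Gamma$ then $\den{\psi}^\S$ is obtained by applying the monotone operation $\bigcup$, $\bigcap$ or $\nabla$ to the denotations of the members of $\Gamma$, so inclusions established for the structurally simpler arguments lift to $\psi$. Hence in both parts, performing a structural induction on $\psi$ at each fixed annotation, it suffices to treat $\psi=x_i$. Throughout I use two monotonicity facts, each by a routine transfinite induction from positivity of the equations: the single approximations $\V^\beta$ increase with $\beta$, and the signature approximations increase with respect to the pointwise order on signatures, i.e.\ if $\vec\beta\le\vec\gamma$ componentwise then $\den{\psi^{\vec\beta}}^\S\subseteq\den{\psi^{\vec\gamma}}^\S$.

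For the second inclusion $\den{\psi^\alpha}^\S\subseteq\den{\psi^{\alpha\dotsm\alpha}}^\S$ I argue by transfinite induction on $\alpha$, the hypothesis being the statement for all $\beta<\alpha$ and all formulas. In the variable case $\den{x_i^\alpha}^\S=\V^\alpha(x_i)=\bigcup_{\beta<\alpha}\den{E(x_i)^\beta}^\S$. Fixing $\beta<\alpha$, the inductive hypothesis gives $\den{E(x_i)^\beta}^\S\subseteq\den{E(x_i)^{\beta\dotsm\beta}}^\S$, and since the signature $\beta\dotsm\beta$ lies pointwise below the signature obtained from $\alpha\dotsm\alpha$ by lowering its $i$-th component to $\beta$, monotonicity of signatures yields $\den{E(x_i)^{\beta\dotsm\beta}}^\S\subseteq\den{E(x_i)^{\alpha\dotsm\beta\dotsm\alpha}}^\S$. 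The latter is one of the sets whose union defines $\den{x_i^{\alpha\dotsm\alpha}}^\S$ (taking the $i$-th component to be $\beta<\alpha$), so taking the union over $\beta<\alpha$ gives $\den{x_i^\alpha}^\S\subseteq\den{x_i^{\alpha\dotsm\alpha}}^\S$.

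For the first inclusion $\den{\psi^{\alpha_0\dotsm\alpha_n}}^\S\subseteq\den{\psi^\alpha}^\S$ I argue by transfinite induction on the signature $\vec\alpha=\alpha_0\dotsm\alpha_n$ ordered lexicographically, the hypothesis being $\den{\chi^{\vec\beta}}^\S\subseteq\den{\chi^{|\vec\beta|}}^\S$ for every $\vec\beta<_{\mathrm{lex}}\vec\alpha$ and every formula $\chi$, where $|\vec\beta|$ is the corresponding sum of components. In the variable case $\den{x_i^{\vec\alpha}}^\S=\bigcup_{\gamma<\alpha_i}\den{E(x_i)^{\vec\alpha[i\mapsto\gamma]}}^\S$, where $\vec\alpha[i\mapsto\gamma]$ lowers the $i$-th component to $\gamma$ and fixes the rest. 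As components $0,\dots,i-1$ are unchanged while component $i$ strictly decreases, $\vec\alpha[i\mapsto\gamma]<_{\mathrm{lex}}\vec\alpha$, so the hypothesis gives $\den{E(x_i)^{\vec\alpha[i\mapsto\gamma]}}^\S\subseteq\den{E(x_i)^{|\vec\alpha[i\mapsto\gamma]|}}^\S$. Since $\den{x_i^\alpha}^\S=\bigcup_{\eta<\alpha}\den{E(x_i)^\eta}^\S$, it then suffices to establish the arithmetic fact that $\gamma<\alpha_i$ forces $|\vec\alpha[i\mapsto\gamma]|<|\vec\alpha|=\alpha$, for then each left-hand term lands inside the union on the right.

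This last arithmetic step is the main obstacle, and it is precisely where the meaning of the sum is decisive. For the natural (Hessenberg) sum, which is commutative and strictly monotone in every argument, lowering a single component strictly decreases the total and the induction closes at once; I would therefore read $\sum_i\alpha_i$ as the natural sum. Were $\sum$ taken as ordinary left-to-right ordinal addition, the strict inequality can fail through \emph{absorption} — lowering a component sitting in front of a sufficiently large tail leaves the sum unchanged — and the bare induction breaks. Recovering the statement in that case would require the additional observation that an absorbed portion of a component's budget contributes nothing new to the denotation, a point made delicate by the failure of union-continuity of $\nabla$ over infinitely branching frames. I would accordingly isolate the arithmetic inequality as a separate sublemma, keeping the two denotational inductions above clean.
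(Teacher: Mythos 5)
The paper states this lemma without proof, so there is no official argument to compare yours against; I can only assess the proposal on its own terms. On those terms it is correct: the reduction to the variable case via monotonicity of denotations in the valuation, the transfinite induction on \( \alpha \) for the right-hand inclusion (using pointwise monotonicity of signature approximants), and the lexicographic induction for the left-hand inclusion are all sound, and with \( \sum_i \alpha_i \) read as the natural (Hessenberg) sum the argument closes.

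Your hesitation over the meaning of \( \sum \) is well placed, and in fact the situation is sharper than you allow: with ordinary left-to-right ordinal addition the first inclusion is simply false, so the ``absorbed budget'' cannot be argued away by any refinement of the induction. Take \( X = \{x_0,x_1\} \) with \( E(x_0) = \sq x_1 \) and \( E(x_1) = \dia x_1 \lor \sq\bot \), and let \( \S \) be a tree whose root \( s \) has successors \( t_1, t_2, \dotsc \), where \( t_j \) heads a finite path of length \( j \) ending in a deadlock. For the signature \( (1,\omega) \) one computes that \( \den{x_1^{0\,\omega}}^\S \) is the set of states from which some finite path reaches a deadlock, which contains every \( t_j \), hence \( s \in \den{x_0^{1\,\omega}}^\S \). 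But \( 1+\omega = \omega \), and \( \den{x_0^{\omega}}^\S = \bigcup_{k<\omega} \den{\sq x_1}^\S_{\V^k} \) demands a uniform bound \( k \) on the successors' distances to deadlock, so \( s \notin \den{x_0^{\omega}}^\S \). The natural sum \( 1 \oplus \omega = \omega+1 \) repairs this, since \( s \in \den{\sq x_1}^\S_{\V^\omega} \subseteq \V^{\omega+1}(x_0) \). So committing to the Hessenberg sum is not merely a matter of keeping your induction clean; it is the only reading under which the stated inclusion holds for an arbitrary enumeration of the variables, and your suggestion that the ordinary-sum version might be ``recovered'' by a continuity observation should be dropped. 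Since the lemma is only used to justify that single-ordinal approximations suffice for bounding closure ordinals, any strictly monotone commutative sum serves the paper equally well, so this reading costs nothing.
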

Following the notion of ordinal approximation in \cref{a2} we  define closure ordinals of $\Sigma$-formulas as follows.
\begin{definition}[Closure Ordinal]\label{CO}
	Given a frame $\S$ and $\Sigma$-formula $\phi$ presented as an equational formula, \emph{the closure ordinal of \( \phi \) in \( \S \)} is the least ordinal $\kappa = \CO[\S] \phi$ such that $\den {\phi^\kappa}^\S  = \den { \phi } ^\S$. 
	The \emph{closure ordinal} of $\phi$, denoted $ \CO \phi$, is the least ordinal $\kappa$ such that for all countable frames $\S$, $\CO[\S] \phi \le \kappa$. 
\end{definition}
As the definition above restricts attention to countable frames, every formula has closure ordinal bounded by the first uncountable ordinal \( \omega_1 \).
Cardinality considerations show that not every countable ordinal is a closure ordinal.
Yet it is open as to precisely which countable ordinals are closure ordinals.
The following partial result was established by Czarnecki~\cite{Cz10}.
\begin{proposition}
	\label{Czarnecki}
	For every \( \alpha < \omega^2 \) there exists a $\Sigma$-formula \( \phi \) such that \( \CO{ \phi } = \alpha \).
\end{proposition}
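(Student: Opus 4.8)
The plan is to exhibit, for each $\alpha<\omega^2$, a $\Sigma$-formula $\phi_\alpha$ with $\CO{\phi_\alpha}=\alpha$. Write $\alpha=\omega\cdot m+n$ with $m,n<\omega$. I would build $\phi_\alpha$ from two controlled gadgets. The first is a \emph{continuous} gadget: $\mu x\,(\dia x\lor q)$ denotes finite reachability of $q$, so its closure ordinal is exactly $\omega$ and it is bounded by $\omega$ on \emph{every} frame, since a state is captured as soon as one finite path to $q$ appears; guarding the recursive call so that it must strictly decrease a finite auxiliary flag turns this into a ladder realising any prescribed finite length. The second is a \emph{capped} gadget built from conjuncts of shape $p_i\land\sq x$: the $\sq$ forces a state to wait until \emph{all} successors are captured, which is precisely what manufactures a genuine limit stage. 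Unconstrained, $\mu x\,(\sq x\lor q)$ already has closure ordinal $\omega_1$, because arbitrarily deep well-founded frames force arbitrarily large countable ranks; the decisive device is therefore to gate each $\sq$-conjunct by one of $m$ propositional flags $p_1,\dots,p_m$ that must strictly decrease at every limit passage, capping the number of limits any frame can trigger at $m$.

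I would then let $\phi_\alpha=\mu x\,\beta_\alpha$ where $\beta_\alpha$ is the disjunction of the flag-guarded $\dia$-ladder with the $m$ flag-gated $\sq$-conjuncts, so that one least fixed point interleaves the behaviours: within a stratum the $\dia$-ladder advances finitely, and each $\sq$-passage consumes a flag and contributes one $\omega$. The formula has a single external $\mu$ with every variable occurrence under $\nabla$, so it lies in the $\Sigma$-fragment; equivalently (\cref{conj-mes}) I present it as a conjunctive system with one variable $x_i$ per stratum together with a tail variable. For the upper bound I would verify that, with lower strata held at their fixpoints, each $x_i$ is defined by a continuous ($\dia$-based) equation and hence closes within $\omega$ of its own unfoldings, while the tail closes within $n$; taking the signature $(\omega,\dots,\omega,n)$ and applying the signature lemma preceding \cref{CO} gives $\den{\phi_\alpha^{\alpha}}^\S=\den{\phi_\alpha}^\S$ for $\alpha=\omega\cdot m+n$, that is $\CO[\S]{\phi_\alpha}\le\alpha$ on every countable $\S$.

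For the matching lower bound I would construct a \emph{staircase} frame $\S_\alpha$ consisting of $m$ stacked copies of the $\sq$-fan -- each an infinite branching into $\dia$-chains of lengths $1,2,3,\dots$ and carrying flag $p_i$ in stratum $i$ -- glued so that the root of stratum $i{+}1$ is captured only once the whole of stratum $i$ has entered at stage $\omega\cdot i$, followed by a final $\dia$-chain of length $n$. A stage-by-stage computation should then show that the designated initial state enters $\den{\phi_\alpha}^{\S_\alpha}$ exactly at stage $\alpha$ and at no earlier stage, whence $\CO{\phi_\alpha}=\alpha$. I expect the main obstacle to be the design of the capped $\sq$-gadget, which must reconcile two opposing demands: the flags have to forbid more than $m$ limit passages on \emph{every} frame, so that $\CO{\phi_\alpha}$ stays below $\omega^2$ rather than collapsing to $\omega_1$, yet they must still let the staircase $\S_\alpha$ force all $m$ limits together with the length-$n$ tail. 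Certifying that one gating discipline meets both requirements, and carrying out the two matching stage computations on $\S_\alpha$, is the technical heart of the argument.
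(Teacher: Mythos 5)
The paper gives no proof of this proposition: it is quoted from Czarnecki's work, and the only hint of the construction appears in the introduction, where the relevant formulas are displayed as disjunctions of conjuncts $p_i\land\sq q_i\land\bigcirc_i x$ with $\bigcirc_i\in\{\dia,\sq\}$. Your architecture (a $\dia$-ladder supplying $\omega$ within each stratum, a $\sq$-passage between strata, one flag per stratum, a length-$n$ tail, and a staircase frame for the lower bound) is exactly the right one and matches that template. But what you have written is a plan, not a proof, and the step you defer to the end --- ``certifying that one gating discipline meets both requirements'' --- is the entire content of the proposition. A bare propositional flag $p_i$ attached to a $\sq x$ conjunct does not constrain the frame at all: a frame may label an arbitrarily deep well-founded tree uniformly with $p_i$, and then the disjunct $p_i\land\sq x$ behaves like $\mu x\,\sq x$ relativised to $p_i$-states, which has closure ordinal $\omega_1$. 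So without the missing device your $\phi_\alpha$ does not have closure ordinal $\omega\cdot m+n$; it has closure ordinal $\omega_1$ and the proposition fails. The missing device is the guard $\sq q_i$ visible in the paper's displayed formula: $q_i$ must be chosen incompatible with $p_i,\dots,p_m$, so that every successor of a state using the $i$-th disjunct is forced into a strictly lower stratum. Only then does the induction on strata go through (all of stratum $<i$ is captured by stage $\omega\cdot i$, so each $\sq$-passage contributes a single successor step and each $\dia$-ladder at most $\omega$), yielding the upper bound $\omega\cdot m+n$ on every frame.

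A secondary inaccuracy: you justify the upper bound by saying that ``with lower strata held at their fixpoints, each $x_i$ is defined by a continuous ($\dia$-based) equation.'' The $\sq$-gated conjuncts are precisely the non-continuous ones --- that is why they can manufacture limit stages at all --- so the continuity argument cannot cover them. The correct upper-bound argument splits into two cases per stratum: the $\dia$-conjuncts close within $\omega$ by continuity, and the $\sq$-conjuncts close in one additional step once the guards have confined their successors to already-closed strata. The signature bookkeeping $(\omega,\dots,\omega,n)\mapsto\omega\cdot m+n$ via the lemma preceding \cref{CO} is fine once this is in place. In short: right approach, but the proof is not done until the guards $\sq q_i$ are specified and the two stage computations (upper bound on all frames, exact stage $\alpha$ on $\S_\alpha$) are actually carried out.
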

We end this section with a result on preservation of closure ordinals between equivalent formulas.
\begin{restatable}{theorem}{thmconjequiv}
	\label{t-conj-equiv}
	Let \( \phi \) be a $\Sigma$-formula. There exists a conjunctive formula \( \phi_c \) such that \( \CO {\phi } \le \CO{\phi_c} \), and  \( \CO \phi < \omega_1 \) implies \( \CO{\phi_c} < \omega_1 \).
\end{restatable}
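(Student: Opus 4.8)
The plan is to construct $\phi_c$ by the dual of the Janin--Walukiewicz disjunctive normal form procedure applied to the equation system representing $\phi$, and then to argue that this transformation both preserves denotations across all frames and does not decrease the closure ordinal. First I would fix an equational presentation $(X, x_0, E)$ of the $\Sigma$-formula $\phi$ guaranteed by the identification of $\Sigma$-formulas with equational formulas over $\LM*$. The normalisation proceeds variable-by-variable: each $E(x) \in \QF[X \cup \LM*]$ is rewritten into conjunctive form $\conj_{i<k}(\disj\Gamma_i \lor \nabla Y_i)$ with $\Gamma_i \subseteq \LM*$ and $Y_i \subseteq X$. The key syntactic manoeuvre is to push the $\nabla$-modality and the propositional structure into a conjunctive normal form, using the distributive laws of the boolean connectives together with the semantic identities governing $\nabla$ (in particular the fact that $\nabla$ distributes suitably over conjunctions when combined with the interpretation given by the displayed function $\nabla\colon\mathcal P(\mathcal P(S))\to\mathcal P(S)$). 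Flattening nested modalities may require introducing auxiliary variables and equations, each defined conjunctively, so that the final system $(X', E')$ is conjunctive and $x_0$ remains the initial variable of $\phi_c$.

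Second I would verify \emph{semantic equivalence}: $\den{\phi}^\S = \den{\phi_c}^\S$ for every frame $\S$. This is where one invokes the dual disjunctive normal form theorem of~\cite{JW95}, which guarantees that the rewriting of each $E(x)$ into conjunctive form preserves denotation under every valuation, and that the auxiliary equations introduced to eliminate nesting compute exactly the intermediate denotations they were designed to name. Because the rewriting acts locally on the right-hand sides of equations and the fixed-point semantics is determined pointwise by these right-hand sides (via \cref{a2}), equivalence at the level of the individual equation bodies lifts to equivalence of the initial variable's denotation, yielding $\den{\phi}^\S = \den{\phi_c}^\S$.

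Third I would establish the closure-ordinal inequalities. The comparison $\CO\phi \le \CO{\phi_c}$ requires relating the approximation stages of the original system to those of the conjunctive one. The subtle point is that $\phi_c$ may have \emph{more} variables than $\phi$, so a single unfolding of an equation in $(X,E)$ may be simulated only by several rounds of unfolding in $(X',E')$; conversely the auxiliary variables, being defined without genuine recursion through the initial variable, should stabilise quickly. I would show that $\den{\phi^\alpha}^\S \subseteq \den{\phi_c^{f(\alpha)}}^\S$ for a suitable nondecreasing $f$, so that whenever $\phi_c$ has reached its fixed point by some stage, $\phi$ has too at a stage no later; this gives $\CO[\S]\phi \le \CO[\S]{\phi_c}$ for each countable $\S$, hence $\CO\phi \le \CO{\phi_c}$. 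For the second implication, assuming $\CO\phi < \omega_1$, I would bound the closure ordinal of $\phi_c$ by controlling the number of extra iterations contributed by the auxiliary equations: since these are non-recursive relative to the original variables, they add only finitely (or boundedly countably) many stages, keeping $\CO{\phi_c}$ below $\omega_1$.

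The main obstacle I anticipate is the closure-ordinal bookkeeping in the third step. Semantic equivalence is essentially a reformulation of the known normal form theorem, but controlling \emph{how many approximation stages} the conjunctive system needs relative to the original --- given that the single-ordinal annotation of \cref{a2} counts every unfolding of every equation indiscriminately --- is delicate. The danger is that the auxiliary variables, or the finer granularity of conjunctive equations, inflate the stage count in a way that is hard to match against the original system's stages. I expect the resolution to hinge on the observation that the auxiliary equations introduced during flattening are not mutually recursive with the core variables in a way that prolongs the fixed-point computation, so that their contribution to the approximation ordinal is absorbed without pushing the closure ordinal past the relevant thresholds.
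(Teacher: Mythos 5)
Your overall strategy --- normalise the equation system into conjunctive form via the dual of the Janin--Walukiewicz construction and then compare approximation stages --- is the same one the paper adopts, but as written the proposal has two genuine gaps. First, the transformation is not achievable by ``distributive laws'' plus flattening of nested modalities: the paper describes it as, in essence, determinisation of alternating parity tree automata. The fresh variables of the conjunctive system are (roughly) macro-states, i.e.\ sets of variables of the original system, and they are mutually recursive among themselves in exactly the way the original variables were. Your key discharging claim --- that the auxiliary variables are ``not mutually recursive with the core variables in a way that prolongs the fixed-point computation'' and hence add only boundedly many stages --- is therefore unsupported, and for the actual construction it is false as stated; controlling the approximation stages of the determinised system is the whole content of the theorem, not residual bookkeeping.

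Second, you lean on semantic equivalence ($\den{\phi}^\S=\den{\phi_c}^\S$) to carry the closure-ordinal comparison, but the paper explicitly warns that the property ``$\CO{\phi}<\omega_1$'' is not invariant under logical equivalence: $\mu x\,\top$ and $\mu x(\sq x\lor\nu y\,\dia y)$ are equivalent, yet the first closes at stage $1$ while the second has closure ordinal $\omega_1$. Hence no argument that factors through equivalence alone can establish the second half of the statement; one must track the specific correspondence between unfoldings of $(X,E)$ and of $(X',E')$. Relatedly, the containment you propose, $\den{\phi^\alpha}^\S\subseteq\den{\phi_c^{f(\alpha)}}^\S$ with $f$ nondecreasing, points the wrong way for the conclusion you draw from it: knowing that every stage of $\phi$ is absorbed into some stage of $\phi_c$ says nothing about whether $\phi$ has closed by the time $\phi_c$ has. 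To obtain $\CO[\S]{\phi}\le\CO[\S]{\phi_c}$ you need, roughly, $\den{\phi_c^{\alpha}}^\S\subseteq\den{\phi^{\alpha}}^\S$ for all $\alpha$, so that at $\kappa=\CO[\S]{\phi_c}$ one gets $\den{\phi}^\S=\den{\phi_c^{\kappa}}^\S\subseteq\den{\phi^{\kappa}}^\S$. Both inequalities in the statement hinge on such stage-by-stage simulations for the determinised system, and these are precisely what your sketch leaves open.
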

The proof of \cref{t-conj-equiv} proceeds by converting each equational formula into an equivalent conjunctive one.
This transformation is, in essence, determinisation of alternating parity tree automata~\cite{green}. Due to space reasons, the syntactic translation from arbitrary equational formulas to conjunctive ones is not shown here. We point out, however, that for the theorem we require that the conjunctive formula so obtained preserves existence of a countable closure ordinal, a property which is not invariant under mere logical equivalence (the reader can compare formulas \( \mu x\, \top \) and \( \mu x ( \sq x \lor \nu y\, \dia\! y  ) \)).

\section{Conservative well-annotations}\label{sec:cwa}

The notion of well-annotations is taken from \cite{Koz88} with minor changes to adapt to conjunctive equation systems. 
An annotated formula is a pair \( ( \phi , \alpha ) \), written \( \phi^\alpha \), where \( \phi \) is a formula and \( \alpha < \omega_1 \).
For a set \( \Theta \)  of annotated formulas, \( \Theta^- \) denotes  the underlying formulas: \( \Theta^- \coloneqq \setof{ \phi }[ \phi^\alpha \in \Theta \text{ for some } \alpha < \omega_1 ] \).
For \( \Gamma \) a set of unannotated formulas, let \( \Gamma^\alpha = \setof{ \phi^\alpha }[\phi \in \Gamma] \).
We utilise a relation $\preceq$ on sets of annotated formulas, defined by
\(
    \Theta \preceq \Xi \) iff for all \( \phi^\alpha \in \Xi \) there exists \( \beta \le \alpha \)  such that \( \phi^\beta \in \Theta 
\).%
\begin{definition}[Well-annotation]\label{d-well-anno}
An \emph{annotation} of a frame \( \S \) for an equation system \( (X,E)\) is a function \( \Theta \colon \S \to \mathcal{P}(\LM \times \omega_1 ) \) associating to each state $s \in \S$ a set \( \Theta_s \) of annotated formulas from \(  \FL(X,E) \). A \emph{well-annotation} of $\S$ for \( (X,E) \) is an annotation \( \Theta \) such that for all \( s \in \S \), \( \phi \), \( \Gamma \) and \( \alpha < \omega_1 \):
\begin{enumerate}
	\item\label{d-well-anno-1} if $\phi^\alpha \in \Theta_s $ and \( \phi \in \LM* \) then $s \in \den \phi ^\S $;
	\item if $ x^\alpha \in \Theta_s$ for \( x \in X \), then $ E(x) ^\beta \in \Theta_s$ for some $\beta < \alpha$;
	\item if $ \disj \Gamma ^\alpha \in \Theta_s$ then $ \Gamma^{\beta} \cap \Theta_s \neq \emptyset $ for some \( \beta \le \alpha \); 
	\item if $\conj \Gamma ^\alpha \in \Theta_s$ then \( \Theta_s \preceq \Gamma^{\alpha} \);	
	\item\label{d-well-anno-modal} if $ \nabla \Gamma ^\alpha \in \Theta_s$, then one of the following properties holds 
	\begin{enumerate}
    \item there exists \( r \in R[s] \) such that \( \Theta_r \preceq \Gamma^\alpha \), 
    \item there exists $\phi \in \Gamma$ such that \( \Theta_r \preceq \setof{ \phi^\alpha } \) for all \( r \in R[s]\). 
\end{enumerate}
\end{enumerate}\ 
\end{definition}
Recall that quantified formulas in this setting are all in \(\LM*\), hence they are considered in \ref{d-well-anno-1}.
When referring to well-annotations we omit explicit mention of the underlying frame and associated equation system if there is no cause for confusion. 
The relation \(\preceq\) introduced above is extended to annotations in a pointwise manner.
That is, for annotations \( \Theta \) and \( \Xi \) of a frame \( \S \), set \( \Theta \preceq \Xi \) iff \( \Theta_s \preceq \Xi_s \) for all \( s \in \S \).
In the following, \( \S , s \vDash \Theta_s \) expresses that \( s \in \den{ \phi^\alpha }^\S \) for every \( \phi^\alpha \in \Theta_s \). 
\begin{theorem}\label{kozen} Given an annotation \( \Theta \) of \( \S \),
\begin{enumerate}
	\item If $\Theta$ is a well-annotation, then $ \S , s \vDash \Theta_s $ for every $s\in \S$.
	\item If $\S,s\vDash \Theta_s$ for all $s$, then there is a well-annotation $\Theta'$ of \( \S \) such that \( \Theta' \preceq \Theta \).
\end{enumerate}
\end{theorem}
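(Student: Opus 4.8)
The plan is to read each clause of \cref{d-well-anno} as the syntactic shadow of the corresponding clause in the approximation semantics of \cref{a2}, so that both directions reduce to matching annotations against membership in the approximations $\den{\phi^\alpha}^\S$. Two preliminary observations drive everything. First, the approximations are monotone in the ordinal: since $\V^\beta(x)=\bigcup_{\gamma<\beta}\den{E(x)}^\S_{\V^\gamma}\subseteq\V^\alpha(x)$ whenever $\beta\le\alpha$, and every variable occurs positively (there is no negation on variables), we get $\den{\psi^\beta}^\S\subseteq\den{\psi^\alpha}^\S$; hence for fixed $s$ and $\psi$ the set of annotations witnessing $s\in\den{\psi^\alpha}^\S$ is upward closed. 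Second, because the unfolding rule of the closure substitutes the whole fixpoint formula for its bound variable, one never exposes a bound variable as free; consequently every formula in $\FL(X,E)$ is either closed (a member of $\LM*$, covering all $\mu/\nu$-subformulas) or a quantifier-free formula over $X\cup\LM*$ whose proper subformulas again lie in $\FL(X,E)$. Closed formulas are thus the base case throughout, discharged by clause 1.

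For the first item I would prove, by induction on the pair $(\alpha,\psi)$ ordered lexicographically -- primarily by the annotation $\alpha$, secondarily by the well-founded immediate-subformula order on quantifier-free formulas -- that $\psi^\alpha\in\Theta_s$ implies $s\in\den{\psi^\alpha}^\S$, simultaneously for all $s$. If $\psi\in\LM*$, clause 1 gives the result outright since $\den{\psi^\alpha}^\S=\den\psi^\S$. Otherwise $\psi$ is a variable $x\in X$, where clause 2 supplies $E(x)^\beta\in\Theta_s$ with $\beta<\alpha$ and the primary hypothesis yields $s\in\den{E(x)^\beta}^\S\subseteq\den{x^\alpha}^\S$; or $\psi$ is $\disj\Gamma$, $\conj\Gamma$ or $\nabla\Gamma$, handled by clauses 3--5. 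In each of the latter the relevant clause produces subformulas $\gamma^\beta\in\Theta_{s'}$ with $\beta\le\alpha$, to which either the primary hypothesis (if $\beta<\alpha$) or the secondary hypothesis (if $\beta=\alpha$, as $\gamma$ is a proper subformula) applies, and monotonicity upgrades $\den{\gamma^\beta}^\S$ to $\den{\gamma^\alpha}^\S$ before assembling membership in the appropriate union, intersection, or $\nabla$-image. The modal clause requires care: sub-clause (b) forces every successor into $\den{\gamma^\alpha}^\S$, matching the first set in the definition of $\nabla$, while sub-clause (a) places a single successor in $\bigcap_{\gamma\in\Gamma}\den{\gamma^\alpha}^\S$, matching the second; the degenerate case $\Gamma=\emptyset$ must be checked separately against $\nabla\emptyset$.

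For the second item I would define the candidate $\Theta'$ not from the formulas occurring in $\Theta$ but over the entire closure: set $\Theta'_s=\setof{\psi^{\alpha_s(\psi)}}[\psi\in\FL(X,E),\ \alpha_s(\psi)\text{ defined}]$, where $\alpha_s(\psi)$ is the least $\beta<\omega_1$ with $s\in\den{\psi^\beta}^\S$. The relation $\Theta'\preceq\Theta$ is immediate: for $\psi^\alpha\in\Theta_s$ the hypothesis $\S,s\vDash\Theta_s$ gives $s\in\den{\psi^\alpha}^\S$, so $\alpha_s(\psi)$ is defined and $\alpha_s(\psi)\le\alpha$, whence $\psi^{\alpha_s(\psi)}\in\Theta'_s$. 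To see $\Theta'$ is a well-annotation I would verify clauses 1--5 directly, exploiting that $\Theta'_s$ contains \emph{every} true formula of the closure tagged with its least witnessing ordinal: minimality of $\alpha_s(\psi)$ forces the strict decrease required for variables in clause 2, the semantic clause holding at $s$ (or at a successor) places the relevant subformula in the appropriate approximation, and closure of $\FL(X,E)$ under subformulas guarantees that subformula lies in the domain of $\Theta'$. For the modal clause the two sets defining $\nabla$ produce sub-clauses (a) and (b) respectively, exactly as in the first item run backwards.

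I expect the main obstacle to be bookkeeping the two interacting inductions in the first item: the annotation must be allowed to stay fixed while the formula shrinks (clauses 3--5) yet must strictly decrease when a variable is unfolded (clause 2), so neither a pure structural nor a pure ordinal induction alone suffices, and it is the monotonicity of approximations that lets the slack $\beta\le\alpha$ be absorbed at each step. In the second item the delicate point is conceptual rather than computational: one must resist defining $\Theta'$ over the formulas of $\Theta$ -- whose underlying sets need not be closed under subformulas or unfolding -- and instead take the least-annotation labelling of the whole closure, which simultaneously secures $\Theta'\preceq\Theta$ and makes all five well-annotation clauses verifiable.
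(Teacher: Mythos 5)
Your argument is correct. The paper itself gives no proof of \cref{kozen} beyond a citation to Kozen's Lemma 4.2, and what you have written is the standard argument behind that lemma, correctly adapted to the $\nabla$-modality and the equation-system approximations of \cref{a2}: the lexicographic induction on $(\alpha,\psi)$ is the right well-founded measure precisely because clause 2 of \cref{d-well-anno} forces a strict ordinal decrease at every variable unfolding while the remaining clauses only shrink the (finite) formula, and your case analysis for the modal clause matches the two components of the semantic $\nabla$ correctly (including the $\Gamma=\emptyset$ degeneracy). Your construction for the second item --- labelling every true formula of the closure with its least witnessing approximation --- is exactly the annotation used in the paper's proof of \cref{p-conservative}, so the two results share a single canonical witness; your observation that one must label the whole closure rather than just the formulas appearing in $\Theta$ is the one genuinely non-obvious point, and you have identified it accurately.
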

\begin{proof}
	See \cite[Lemma 4.2]{Koz88}.
\end{proof}
We are interested in well-annotations that are \( \preceq\)-minimal for a given frame and equation system.
We call these annotations \emph{conservative}.
\begin{definition}[Conservative well-annotation]\label{conservative}
	A well-annotation $\Theta$ of \( \S \) is conservative if two conditions are met:
	\begin{enumerate}
		\item for every \( s \in \S \) and \( \phi \) there is at most one \( \alpha < \omega_1 \) such that \( \phi^\alpha \in \Theta_s \).
		\item for every well-annotation \( \Theta' \) of \( \S \), $\Theta \preceq \Theta'$.
	\end{enumerate}\ 
\end{definition}
The existence of conservative well-annotations is guaranteed by
\begin{proposition}
	\label{p-conservative}
	Let \( \S \) be a frame, \(r\in S\) and \( x \in X \) such that \( r \in \den x^\S \). 
	There exists a conservative well-annotation \( \Theta \) of \( \S \) such that \( x \in \Theta_r^- \).
\end{proposition}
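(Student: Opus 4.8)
The plan is to construct the required annotation as the \emph{pointwise-minimal} well-annotation and then verify, clause by clause, that it is indeed a well-annotation. First I would secure that the relevant family is non-empty. Since $r\in\den{x}^\S=\bigcup_{\alpha<\omega_1}\den{x^\alpha}^\S$, choose $\alpha_0<\omega_1$ with $r\in\den{x^{\alpha_0}}^\S$ and consider the annotation $\Theta^0$ given by $\Theta^0_r=\setof{x^{\alpha_0}}$ and $\Theta^0_s=\emptyset$ for $s\neq r$. Then $\S,s\vDash\Theta^0_s$ for every $s$ (vacuously when $s\neq r$), so \cref{kozen}(2) yields a well-annotation $\Theta'\preceq\Theta^0$; in particular $x^\beta\in\Theta'_r$ for some $\beta\le\alpha_0$. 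Hence at least one well-annotation carries $x$ at $r$.

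Next, for each state $s$ and each formula $\phi\in\FL(X,E)$ that appears, with some ordinal, in at least one well-annotation at $s$, set
\[
  \theta(s,\phi)=\min\setof{\alpha<\omega_1}[\phi^\alpha\in\Xi_s\text{ for some well-annotation }\Xi],
\]
which exists by well-ordering of the ordinals, and define $\Theta_s=\setof{\phi^{\theta(s,\phi)}}[\theta(s,\phi)\text{ is defined}]$. Two of the three remaining goals are then immediate. By construction each $\phi$ receives at most one annotation at $s$, which is condition~1 of \cref{conservative}; and if $\phi^\alpha\in\Xi_s$ for \emph{any} well-annotation $\Xi$ then $\theta(s,\phi)\le\alpha$ and $\phi^{\theta(s,\phi)}\in\Theta_s$, whence $\Theta\preceq\Xi$, which is condition~2. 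The first paragraph gives $x\in\Theta_r^-$.

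The substantive step is to check that $\Theta$ is itself a well-annotation, i.e.\ that it satisfies the clauses of \cref{d-well-anno}. The strategy is local: given $\psi^{\theta(s,\psi)}\in\Theta_s$, fix a well-annotation $\Xi$ attaining the minimum $\theta(s,\psi)$ at $s$, apply the corresponding clause of \cref{d-well-anno} to $\Xi$ to obtain a witnessing subformula, and transfer it to $\Theta$ using that $\theta(\cdot,\chi)$ never exceeds the annotation of $\chi$ in $\Xi$. For the Boolean and fixed-point clauses this is direct: any witness $\chi^\gamma\in\Xi_s$ (or $\Xi_r$) lies in a well-annotation at that state, so $\theta$ is defined there with $\theta(\cdot,\chi)\le\gamma$, and the \emph{same} $\chi$ re-witnesses the clause for $\Theta$ with an annotation no larger than $\gamma$; thus the inherited bound—strict for the variable clause, non-strict for the disjunction and conjunction clauses—is preserved, while clause~1 is purely semantic and independent of the ordinal. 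The delicate case is the modal clause, where the witness lives at successor states. In its first subcase one has $\Xi_r\preceq\Gamma^\alpha$ for some $r\in R[s]$, and minimality of $\theta$ at $r$ gives $\Theta_r\preceq\Gamma^\alpha$ for the same $r$. In the second subcase the crux is that the distinguished formula $\phi\in\Gamma$ is a \emph{single} formula required to work uniformly at \emph{every} $r\in R[s]$: since $\phi$ occurs in $\Xi_r$ with annotation $\le\alpha$ at each such $r$, minimality yields $\theta(r,\phi)\le\alpha$ and hence $\Theta_r\preceq\setof{\phi^\alpha}$ for that very $\phi$ at all successors simultaneously. I expect this uniform-witness transfer in the modal clause to be the main obstacle, but it goes through precisely because replacing $\Xi$ by the pointwise minimum $\Theta$ can only lower annotations, never breach the bounds the clauses demand.
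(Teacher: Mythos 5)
Your proof is correct, but it takes a different route from the paper. The paper's proof is a one-line semantic construction: it puts \( \phi^\alpha \in \Theta_s \) iff \( s \in \den{\phi}^\S \) and \( \alpha \) is the \emph{least} ordinal with \( s \in \den{\phi^\alpha}^\S \). With that definition, minimality (\( \Theta \preceq \Theta' \) for every well-annotation \( \Theta' \)) falls out of \cref{kozen}(1), since any \( \phi^\alpha \in \Theta'_s \) forces \( s \in \den{\phi^\alpha}^\S \) and hence dominates the least such ordinal; what remains implicit is the semantic verification that this least annotation satisfies the clauses of \cref{d-well-anno}. You instead define \( \Theta \) as the pointwise infimum over all well-annotations, which makes both conditions of \cref{conservative} trivial by construction, and you shift the work to the clause-by-clause check that the infimum is again a well-annotation; your witness-transfer argument there is sound, including the one genuinely delicate point (the uniform choice of \( \phi \in \Gamma \) across all successors in the second modal subcase), and you correctly invoke \cref{kozen}(2) only to seed the construction with some well-annotation carrying \( x \) at \( r \). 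By \cref{kozen} the two constructions produce the same annotation, so the difference is one of bookkeeping: the paper leans on the approximation semantics and soundness, while your argument is purely order-theoretic over the set of well-annotations and uses the semantics only for non-emptiness.
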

\begin{proof}
	The desired annotation is given by \( \phi^\alpha \in \Theta_s \) iff \( s \in \den{\phi}^\S \) and \( \alpha \) is least such that \( s \in \den{\phi^\alpha}^\S \).
\end{proof}
The following proposition provides the crucial link between conservative well-annotations and closure ordinals.
\begin{proposition}
	\label{conservative-CO}
	Suppose \( \Theta \) is a conservative well-annotation of \( \S \) and \( \phi^\alpha \in \Theta_s \) for some \( s \in \S \).
	Then $\CO \phi\geq \alpha$.
\end{proposition}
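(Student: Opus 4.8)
The plan is to reduce the global statement $\CO{\phi} \ge \alpha$ to the single-frame claim $\CO[\S]{\phi} \ge \alpha$ — legitimate since, by \cref{CO}, $\CO[\S]{\phi}$ is bounded by $\CO{\phi}$ over all countable frames, so $\CO[\S]{\phi} \le \CO{\phi}$ — and then to identify $\alpha$ with the least approximation stage at which $s$ enters $\den{\phi}^\S$. Two inputs set this up. From the first clause of \cref{kozen}, since $\Theta$ is a well-annotation with $\phi^\alpha \in \Theta_s$, we obtain $s \in \den{\phi^\alpha}^\S$, hence $s \in \den{\phi}^\S = \bigcup_{\kappa<\omega_1}\den{\phi^\kappa}^\S$. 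Secondly, let $\Xi$ denote the canonical annotation constructed in the proof of \cref{p-conservative}, namely $\psi^\gamma \in \Xi_t$ iff $t \in \den{\psi}^\S$ and $\gamma$ is least with $t \in \den{\psi^\gamma}^\S$; this $\Xi$ is, in particular, a well-annotation.

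Writing $\gamma_0$ for the least stage with $s \in \den{\phi^{\gamma_0}}^\S$, the crux is to show $\alpha \le \gamma_0$. Suppose toward a contradiction that $\gamma_0 < \alpha$. By construction $\phi^{\gamma_0} \in \Xi_s$, and conservativity of $\Theta$ (second clause of \cref{conservative}) gives $\Theta \preceq \Xi$, so the definition of $\preceq$ yields some $\beta \le \gamma_0$ with $\phi^\beta \in \Theta_s$. Single-valuedness of conservative annotations (first clause of \cref{conservative}) forces $\beta = \alpha$, contradicting $\beta \le \gamma_0 < \alpha$. Hence $\alpha \le \gamma_0$.

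To conclude, for every $\kappa < \alpha \le \gamma_0$, minimality of $\gamma_0$ gives $s \notin \den{\phi^\kappa}^\S$, whereas $s \in \den{\phi}^\S$; thus $\den{\phi^\kappa}^\S \ne \den{\phi}^\S$ for all $\kappa < \alpha$. Consequently the least $\kappa$ witnessing $\den{\phi^\kappa}^\S = \den{\phi}^\S$ is at least $\alpha$, i.e.\ $\CO[\S]{\phi} \ge \alpha$, and therefore $\CO{\phi} \ge \alpha$.

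The main obstacle is the crux step, which bridges the purely order-theoretic datum of the annotation (an ordinal attached to $\phi$ at $s$) with the fixpoint approximation stages of \cref{a2}. It goes through only by invoking both clauses of conservativity simultaneously: minimality against the canonical well-annotation pushes $\alpha$ down to $\gamma_0$, while single-valuedness excludes any strictly smaller annotation, pinning the annotating ordinal exactly to the approximation stage. A secondary point requiring care is that the argument reads $\den{\phi^\kappa}^\S$ and $\CO[\S]{\phi}$ for an arbitrary $\phi \in \FL(X,E)$, not merely an initial variable, and that it degenerates correctly — with $\alpha = 0$ — when $\phi$ is a closed formula in $\LM*$, whose approximations are already stationary from stage $0$.
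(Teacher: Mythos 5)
Your proof is correct, and it takes a mildly different route from the paper's. The paper argues by contradiction from $\CO \phi = \gamma < \alpha$: it adjoins $\phi^\gamma$ to $\Theta_s$, invokes part (2) of \cref{kozen} to extend the result to a well-annotation $\Theta'$ with $\Theta'_s \preceq \Theta_s \cup \{\phi^\gamma\}$, and concludes $\Theta \npreceq \Theta'$, contradicting conservativity. You instead play $\Theta$ against the canonical well-annotation $\Xi$ of \cref{p-conservative} and establish the sharper intermediate fact that $\alpha$ coincides with $\gamma_0$, the least approximation stage at which $s$ enters $\den{\phi}^\S$; the bounds $\CO[\S]{\phi} \ge \gamma_0$ and $\CO{\phi} \ge \CO[\S]{\phi}$ then finish the argument. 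The underlying mechanism is the same in both cases --- clause (2) of \cref{conservative} applied to a competing well-annotation realising smaller ordinals, with clause (1) pinning down the unique annotation of $\phi$ at $s$ --- and \cref{p-conservative} is itself essentially an instance of \cref{kozen}(2). What your version buys is an explicit identification of conservative annotations with least approximation stages, which also makes the degenerate case of closed formulas transparent; what it costs is reliance on the (stated but not detailed) claim that the canonical $\Xi$ is a well-annotation, where the paper needs only the extension property of \cref{kozen}. One caveat shared by both arguments: the step $\CO[\S]{\phi} \le \CO{\phi}$ (equivalently, $\den{\phi^{\CO\phi}}^\S = \den{\phi}^\S$ in the paper's version) implicitly assumes $\S$ is countable, which the proposition statement does not say but which is the standing convention of the paper.
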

\begin{proof}
	Let $\Theta$ be a conservative well-annotation of \( \S \) and $\phi^\alpha \in \Theta_s$. 
	Assume $\CO \phi=\gamma <\alpha$. 
	By \cref{CO},  $ s \in \den {\phi^\alpha} ^\S = \den {\phi^\gamma} ^\S$. 
	Consider the annotation \( \hat \Theta \) given by \( \hat{\Theta}_s = \Theta_s \cup \{\phi^\gamma\} \) and $\hat{\Theta}_{r} = \Theta_{r}$ for any $r \neq s$.
	\Cref{kozen} implies that \( \hat\Theta \) can be extended to a well-annotation \( \Theta' \) of \( \S \) satisfying \( \Theta'_s \preceq \hat\Theta_s \).
	But then $ \Theta \npreceq \Theta' $ contradicting the assumption that \( \Theta \) is conservative.
	We conclude that $\CO \phi \geq \alpha$.
\end{proof}

Not every formula in a conservative well-annotation of $\phi^\alpha$ plays a role in generating $\alpha$ as ordinal. 
From the large quantity of information provided by a conservative well-annotation, we want to be able to identify the part of the annotation that is relevant in determining the main ordinal, i.e., its \emph{relevant part}. 
Specifically, for a formula \( \phi^\alpha\) that is in the relevant part of some $\Theta_s$, the definition of relevant part guarantees that a new frame with \(\phi\) annotated by some ordinal \(>\alpha\) can be obtained, if it is possible to alter the initial frame in a way that (1) does not alter the formulas satisfied at the successors of \( s \), and (2) increases the ordinal annotation of every relevant formula at a successor of \( s \) to some ordinal \( > \alpha \).

For the following we introduce some notation concerning a well-annotation \( \Theta \) of \( \S \). 
Given the twofold condition on $ {\nabla \Gamma^\alpha\in\Theta_s} $ in \cref{d-well-anno}, it is useful to identify the sets witnessing the two existential claims.
For \( s \in \S \) and \( \Gamma \subseteq \LM \), define
\begin{align*}
	&\Gamma^s_{\sq} \coloneqq \setof{ \phi \in \Gamma }[ \phi \in \bigcap_{t \in R[s]} \Theta_t^- ]
	&
	&\Gamma^s_{\dia} \coloneqq \setof{ t \in R[s] }[ \Gamma \subseteq \Theta_t^- ] .
\end{align*}
For \( \alpha > 0 \), define \( \pred \alpha \) as the least ordinal \( < \alpha \) such that \( \alpha = \pred \alpha + \omega^\eta \) for some \( \eta \). Note also that \( \eta \) is uniquely determined and independent of the choice of \( \pred \alpha \).
\begin{definition}[Relevant part]\label{d-relevant}
	Let \( \Theta \) be a conservative well-annotation of \( \S \) and \( \Phi \) an annotation of the same frame.
	We call \( \Phi \) a \emph{relevant part} of \( \Theta \) if for every $s\in \S$,
	\begin{enumerate}
	\item \( \Phi_s \subseteq \Theta_s \);
    \item  if \( x^{\alpha} \in \Phi_s \) then \( E(x)^{\beta } \in \Phi_s \) where \( \alpha = \beta + 1 \);\label{d-relevant-variable}
   \item if \( \disj \Gamma^\alpha \in \Phi_s  \) and \( \alpha > 0 \) then \( \Gamma^\alpha \cap \Theta_s \subseteq \Phi_s \);
    \item if \( \conj \Gamma^\alpha \in \Phi_s  \) then \( \chi^\alpha \in \Phi_s\) for exactly one \(\chi \in \Gamma\);
	\item\label{d-relevant-nabla} if \( \nabla \Gamma^\alpha \in \Phi_s \) and \( \alpha > 0 \) then: 
	\begin{enumerate} 
		\item\label{d-relevant-nabla-a} for all \( \eta < \alpha \) and \( \phi \in \Gamma^s_{\sq} \) there is a \( r \in R[s] \) and \( \beta > \eta \) s.t.\ \( \phi^\beta \in \Phi_r \), 
		\item\label{d-relevant-nabla-b} \( \Gamma \cap \Phi_r^- \neq \emptyset \) for every \( r \in \Gamma^s_{\dia} \), and
		\item\label{d-relevant-nabla-c} for all \( r \in R[s] \) if \( \Phi_r \neq \emptyset \) then \( y^\beta \in \Phi_r \) for some \( y \in \Gamma \) and \( \beta > \pred \alpha \).
	\end{enumerate}
	\end{enumerate}\ 
\end{definition}
Formulas in \( \Phi_s \) are referred to as \emph{relevant formulas} at \( s \).
The final condition of the definition, \ref{d-relevant-nabla-c}, has the role of ensuring that the formulas relevant at a successor state sit in the same ordinal `neighbourhood'. Since \( \pred{\omega + 1} = \omega \) and \( \pred {\omega.(k+1)} = \omega.k \), in these cases all continuations through a modality should be annotated by at least \( \omega + 1 \) and \( \omega.k \) respectively.
In other words, viewing the sequences of formulas in the relevant part as a formula `trace' through the well-annotation \( \Theta \), these traces are restricted in the size of their ordinal decrements.
Notice, however, that the ordinal annotations along these relevant `traces' need not be weakly decreasing.
If for \( r \in R[s] \) we have \( \Gamma \subseteq \Theta_r^- \) then we require some \( \psi^{\beta} \in \Gamma^\beta \cap \Theta_r \) to be marked as relevant even if \( \beta > \alpha \).
The reason for this requirement is that such a successor \( r \), although not `relevant' to witnessing the ordinal of \( \nabla \Gamma \) in \( \Theta_s \) may become `relevant' after an  attempt to force an increase in the ordinal annotation.
It could be the case, for example, that for some \( r \in R[s] \) we have \( \Gamma^{\alpha+2} \subseteq \Theta_r \);
if the annotation of all relevant formulas in \( R[s]\setminus\setof r \) is increased by, say, \( \omega \) then without also increasing the annotation at \( r \) we find that \( \nabla \Gamma^\alpha \) at \( s \) has increased only to \( \nabla \Gamma^{\alpha+2} \).

In order to isolate sufficient conditions for undertaking a pumping of well-annotations, a further constraint can be placed on relevant parts to the effect that each path through the underlying frame carries at most one relevant trace of formulas.
In the context of conjunctive formulas, this condition amounts to there being at most one relevant modal formula at each state.
\begin{restatable}{proposition}{prelevantexists}
	\label{p-relevant-exist}
	Let \( \phi = ( X , x , E ) \) be a conjunctive formula.
	Let \( \Theta \) be a conservative well-annotation and \( x^\alpha \in \Theta_{s} \) for some \( \alpha < \omega_1 \) and state $s$.
	There exists a tree \( \lts T \), a conservative well-annotation \( \Theta' \) of \( \lts T \) and a relevant part \( \Phi \) satisfying:
	\begin{enumerate}
		\item \( x^\alpha \in \Phi_{r} \) where \( r \) is the root of \( \lts T \).
		\item\label{p-relevant-exist2} For every \( t \in \lts T \), there is at most one \( \Gamma \) such that \( \nabla \Gamma \in \Phi_t^- \).
	\end{enumerate}
\end{restatable}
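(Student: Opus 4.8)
The plan is to obtain \( \lts T \) by unravelling \( \S \) from \( s \) and then to read off \( \Phi \) from \( \Theta \) by tracking a single ``trace'' of relevant formulas through the tree, duplicating successors wherever a modal formula forces several variables downwards.

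First I would unravel. Let \( p \colon \lts T_0 \to \S \) be the tree unravelling of \( \S \) rooted at \( s \), so that \( p \) is a p-morphism sending the root \( r \) to \( s \). Since each approximant \( \den{ \psi^\beta } \) is built from the monotone operators of the frame and is therefore invariant under p-morphisms, \( t \in \den{ \psi^\beta }^{\lts T_0} \) iff \( p(t) \in \den{ \psi^\beta }^{\S} \) for all \( t \), \( \psi \), \( \beta \). Consequently the canonical conservative well-annotation \( \Theta' \) of \( \lts T_0 \) supplied by \cref{p-conservative} satisfies \( \Theta'_t = \Theta_{p(t)} \), where \( \Theta \) is the conservative well-annotation of the statement (which, by the two clauses of \cref{conservative}, is unique, hence canonical). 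In particular \( x^\alpha \in \Theta'_r \), which already secures the tree and requirement~(1).

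Next I would define \( \Phi \) by a recursion down the tree, keeping it as small as the closure conditions of \cref{d-relevant} allow. I seed \( \Phi_r \) with \( x^\alpha \) and propagate: a relevant variable \( x^\alpha \) forces \( E(x)^{\beta} \) with \( \alpha = \beta+1 \) (so \( \alpha \) is a successor and \( E(x) \) is conjunctive); a relevant conjunction forces the choice of exactly one conjunct \( (\disj \Gamma_i \lor \nabla Y_i)^\gamma \); and a relevant disjunction forces \( \Gamma^\gamma \cap \Theta'_t \subseteq \Phi_t \). The decisive observation is that every closed formula \( \psi \in \LM* \) receives annotation \( 0 \) in a canonical conservative well-annotation, because relative to \( (X,E) \) the approximant \( \den{ \psi^\gamma } \) equals \( \den \psi \) for all \( \gamma \). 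Hence, when \( \gamma > 0 \), the closed disjuncts of \( \disj \Gamma_i \) occur in \( \Theta'_t \) only with annotation \( 0 \neq \gamma \) and so are not dragged into \( \Phi_t \) by the disjunction condition; the sole open disjunct that can become relevant is \( \nabla Y_i \). Choosing a single conjunct therefore produces at most one relevant modal formula at \( t \).

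The main obstacle is the modal step: satisfying \ref{d-relevant-nabla-a}--\ref{d-relevant-nabla-c} for a relevant \( \nabla Y_i^\gamma \) may require several distinct variables of \( Y_i \) to be realised at the successors --- cofinally below \( \gamma \) for each variable in \( \Gamma^{s}_{\sq} \), and at least one variable at each successor in \( \Gamma^{s}_{\dia} \) --- so that routing them through the children dictated by \( R[p(t)] \) could place two relevant variables, hence two relevant modal formulas, on one node. I resolve this by refining the unravelling: rather than the children prescribed by \( R[p(t)] \), a node is given as many copies of each successor as the conditions demand, and each copy is assigned exactly one designated relevant variable. Duplicating a successor keeps \( p \) a p-morphism and so preserves \( \Theta' \) and satisfaction, while the cofinal supply of witnesses needed for \ref{d-relevant-nabla-a} is furnished by conservativity of \( \Theta \), which pins the annotation \( \gamma \) of \( \nabla Y_i \) to the successor annotations of the variables in \( Y_i \). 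Carrying a single variable per child propagates the property ``at most one relevant \( \nabla \)'' down the tree, and taking the union over the stages of the recursion yields the tree \( \lts T \), the conservative well-annotation \( \Theta' \) and the relevant part \( \Phi \) with property~\ref{p-relevant-exist2}.
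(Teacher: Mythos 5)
Your proposal is correct and follows essentially the same route as the paper's argument: unravel the frame into a tree carrying the (unique) conservative well-annotation, observe that the conjunctive shape makes the modal clause of \cref{d-relevant} the only non-trivial case (closed disjuncts receive annotation $0$ and so are never dragged into the relevant part), and duplicate successor nodes so that each child is assigned a single designated relevant variable, propagating the ``at most one relevant $\nabla$'' property down the tree. The additional details you supply --- invariance of approximants under the unravelling p-morphism and the cofinality of successor annotations forced by conservativity --- are accurate and fill in the paper's sketch rather than diverging from it.
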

    The crux of the argument is in ensuring that the requirements of \cref{d-relevant} can be met while marking at most one modal formula as relevant at each state. 
    The restriction to conjunctive formulas makes condition \ref{d-relevant-nabla} of the definition the only non-trivial case.
    Duplicating successor nodes enables the desired assignment of relevant formulas to successors.

\section{Limits on closure ordinals}\label{sec:bound}

 The argument showing that $\omega^2$ is an upper bound on the closure ordinals of formulas in the $\Sigma$-fragment comprises two parts.
 First, the existence of a pumping procedure for sufficiently large conservative well-annotations is established. As a consequence, for every formula $\phi$ in the fragment there is a measure $N$, related to the size of the formula, that determines an interval $[\omega.N,\omega^2)$ where the possibility of the closure ordinal of $\phi$ is excluded.
In the second part, the interval is extended to all the (countable) ordinals above $\omega.N$ by proving that the consequences of the pumping method reach beyond $\omega^2$.
Combining the two parts, we obtain $\omega^2 = \sup_k \omega.k$ as an upper bound for the $\Sigma$-fragment.
 
A \emph{path} in a  frame $\S=(S,R,\Lambda)$  is a sequence of states \( ( s_i )_{ i \le k} \) such that \( s_{i+1} \in R[s_i] \) for \( i < k \). An infinite path through \( \S \) is an infinite sequence \( ( s_i)_{i < \omega} \) such that every initial sequence is a path through \( \S \).
If $\S$ is a tree we use $\rho$ to denote the root state.
For the following let a conjunctive equation system \( ( X , E ) \) be fixed.
\begin{definition}\label{d-ordinal-ann}
Given \( \varphi, \Gamma \subseteq \FL(X,E) \), define \( O( \varphi ,\Gamma ) \) to be the supremum of ordinals \( \kappa < \omega_1 \) for which there exists a conservative well-annotation \( \Theta \) of a tree such that \( \Gamma = \Theta_\rho^- \) and \( \phi^\kappa \in \Theta_\rho \).
\end{definition}

\begin{definition}[Optimal annotation]\label{d-optimal}
Given a state $s$ in a conservative well annotation $\Theta$, $\Theta_s$ is \emph{optimal} with respect to a formula $\varphi^\alpha \in \Theta_s$ if $O(\varphi, \Theta_s) < \alpha +\omega$.
\end{definition}
Since a key element of the argument in the proof of \cref{pump} will be the non-existence of optimal paths under certain conditions, we introduce the notion of repetition pair. As recognized by \cref{optimal-trace-points}, repetition pairs are designed to entail non-optimality, given that they present candidates for additional pumping.
\begin{definition}[Repetition pair]
	\label{reptrace}
	Let \( \Theta \) be a conservative well-annotation of \( \S \) and \( \Phi \) a relevant part of \( \Theta \).
	A state \( s \in \S \) is a \emph{limit state} of \( \Phi \) if \( \nabla \Gamma^\lambda \in \Phi_s \) for some \( \Gamma \) and limit ordinal \( \lambda \). 
	A pair of states $( s , t )$ in \( \S \) is a \emph{repetition pair} if:
\begin{enumerate}
	\item there is a path \( (s_i)_{i\le k} \) with \( s = s_0 \) and \( t = s_k \),
	\item $( \Theta_{s}^- , \Phi_{s}^- ) = ( \Theta_{t}^- , \Phi_{t}^- ) $,
	\item\label{reptrace-3} \( s \) and \( t \) are limit states and \( \nabla \Gamma^\alpha \in \Phi_s \) and \( \nabla \Gamma^\beta \in \Phi_{t} \) for some \( \Gamma \) and \( \beta < \alpha \).
\end{enumerate}
We refer to $t$ as the \emph{bud} and $s$ as the \emph{companion} of the repetition pair.
\end{definition}
Call a path whose limit states are all optimal an optimal path. 
\begin{lemma}\label{optimal-trace-points}
On every optimal path there are no repetition pairs.
\end{lemma}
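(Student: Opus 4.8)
The plan is to argue by contradiction: I would assume that some optimal path contains a repetition pair \( (s,t) \), with companion \( s \) and bud \( t \), and show that this forces the bud \( t \) to be non-optimal, contradicting optimality of the path. Unpacking \cref{reptrace}, the pair supplies a path from \( s \) to \( t \) (a sub-path of the optimal path), the equality \( (\Theta_s^-,\Phi_s^-) = (\Theta_t^-,\Phi_t^-) \), and a common \( \Gamma \) with \( \nabla\Gamma^\alpha \in \Phi_s \), \( \nabla\Gamma^\beta \in \Phi_t \) and \( \beta < \alpha \). Both \( s \) and \( t \) are limit states, so by the single-relevant-modal-formula property (\cref{p-relevant-exist}\,(\ref{p-relevant-exist2})) \( \nabla\Gamma \) is the \emph{unique} modal formula occurring in \( \Phi_s^- = \Phi_t^- \); combined with conservativity (at most one annotation per formula, \cref{conservative}) this identifies \( \alpha \) and \( \beta \) with the respective limit witnesses, so both \( \alpha \) and \( \beta \) are limit ordinals.

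The heart of the argument is to read off a lower bound for \( O(\nabla\Gamma, \Theta_t^-) \) directly from the companion. First I would observe that the subtree \( \lts{T}_s \) rooted at \( s \) is a generated subframe of \( \lts{T} \), so that every formula (and every approximation thereof) has the same denotation at states of \( \lts{T}_s \) whether computed in \( \lts{T} \) or in \( \lts{T}_s \); guardedness of the equation system (\cref{MES}) is what makes even the approximations forward-looking. Consequently the restriction of \( \Theta \) to \( \lts{T}_s \) is again the conservative well-annotation, now of the tree \( \lts{T}_s \), with root labels \( \Theta_s^- \) and \( \nabla\Gamma^\alpha \) at the root. By \cref{d-ordinal-ann} this witnesses \( O(\nabla\Gamma, \Theta_s^-) \ge \alpha \). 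Since \( \Theta_s^- = \Theta_t^- \), the same bound gives \( O(\nabla\Gamma, \Theta_t^-) \ge \alpha \).

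It remains to convert \( \beta < \alpha \) into the gap needed for non-optimality. As \( \beta \) is a limit ordinal, the least limit ordinal exceeding it is \( \beta + \omega \); and as \( \alpha \) is itself a limit ordinal with \( \alpha > \beta \), necessarily \( \alpha \ge \beta + \omega \). Hence \( O(\nabla\Gamma, \Theta_t^-) \ge \alpha \ge \beta + \omega \), which by \cref{d-optimal} says precisely that \( \Theta_t \) is \emph{not} optimal with respect to \( \nabla\Gamma^\beta \). Since \( t \) is a limit state lying on the optimal path, this contradicts the definition of an optimal path, completing the argument.

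I expect the main obstacle to be the middle step, namely justifying that the restriction of \( \Theta \) to \( \lts{T}_s \) is genuinely a \emph{conservative} well-annotation of \( \lts{T}_s \) rather than merely a well-annotation. The well-annotation conditions of \cref{d-well-anno} are local, each referring only to a state and its successors, and so survive restriction to a generated subframe; but conservativity is a global minimality property (\cref{conservative}). The cleanest route is to note that the conservative well-annotation is unique — any two \( \preceq \)-minimal well-annotations coincide by the per-formula uniqueness clause — and that the canonical annotation of \cref{p-conservative}, defined by least approximation ordinals, commutes with passing to a generated subframe. I would need to make the generated-subframe invariance of the approximations explicit (it is exactly here that guardedness is used), so that the least-ordinal data defining \( \Theta \) at states of \( \lts{T}_s \) is unchanged; conservativity of the restriction then follows, and everything else reduces to bookkeeping and the elementary ordinal arithmetic above.
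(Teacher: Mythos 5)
Your proposal is correct and follows essentially the same route as the paper, whose entire proof is the one-line observation that the bud of a repetition pair is a non-optimal limit state; your argument simply unpacks why that holds (restricting $\Theta$ to the subtree at the companion to witness $O(\nabla\Gamma,\Theta_t^-)\ge\alpha\ge\beta+\omega$). The extra care you take over conservativity of the restricted annotation and the limit-ordinal arithmetic is exactly the content the paper compresses into ``by definition.''
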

\begin{proof}
Straightforward from the fact that the bud node is a non-optimal limit point by definition.
\end{proof}
Finally, the next proposition specifies sufficient conditions for the existence of repetition pairs.
\begin{proposition}\label{pairexists}
	Let $ N = 2^{ 2 \card \phi } + 1 $ and $\Theta$ a conservative well-annotation of \( \S \) with respect to \( \phi \). 
	Let \( (s_i)_{i\le k} \) be a path through \( \S \) and \( ( \phi_i , \alpha_i )_{i \le k} \) a sequence of annotated formulas such that \( \phi_i^{\alpha_i} \in \Phi_{s_i} \) for each \( i \).
	If \( \alpha_0 \ge \omega.N \) and \( \alpha_k = 0 \), then some \( (s_i,s_j) \) is a repetition pair.
\end{proposition}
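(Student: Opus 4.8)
The plan is to locate \( N \) \emph{limit states} along the path whose relevant modal formulas carry strictly decreasing annotations, and then to apply a pigeonhole argument on the finitely many possible `types' of a state to extract a repetition pair. Recall from \cref{reptrace} that a repetition pair consists of a companion \( s_i \) and a later bud \( s_j \) (\( i<j \)) that agree on \( (\Theta^-,\Phi^-) \), are both limit states, and carry a common modal formula \( \nabla\Gamma \) with strictly larger annotation at \( s_i \) than at \( s_j \). Since \( \FL(X,E) \) has \( \card\phi \) elements, a state can exhibit at most \( 2^{\card\phi}\cdot 2^{\card\phi}=2^{2\card\phi} \) distinct pairs \( (\Theta^-_s,\Phi^-_s) \). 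As \( N=2^{2\card\phi}+1 \) exceeds this count, any \( N \) limit states must contain two sharing a type; if moreover their annotations strictly decrease in path order, \cref{p-relevant-exist} guarantees a single relevant modal formula per state, forcing the two occurrences to concern the \emph{same} \( \nabla\Gamma \), and all three clauses of \cref{reptrace} are met.

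It remains to produce the \( N \) limit states with strictly decreasing annotations. Here I would track the \emph{level} of the trace, namely the unique ordinal \( c_i \) with \( \alpha_i=\omega.c_i+d_i \) and \( d_i<\omega \). Because successive states lie on a path and the only clauses of \cref{d-relevant} linking a state to its successors are the modal ones (condition~\ref{d-relevant-nabla}), each \( \phi_i \) advancing the path is a \( \nabla \)-formula with annotation \( \alpha_i \), and the passage to \( s_{i+1} \) is governed by condition~\ref{d-relevant-nabla-c}: the entry annotation at \( s_{i+1} \) exceeds \( \pred{\alpha_i} \), after which the within-state unfolding may only decrement through finitely many successor steps (condition~\ref{d-relevant-variable}), none of which can cross a multiple of \( \omega \) downward. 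Computing \( \pred \) shows that when \( \alpha_i \) is a successor ordinal the level cannot drop (\( c_{i+1}\ge c_i \)), whereas when \( \alpha_i=\omega.c_i \) is a limit it drops by at most one (\( c_{i+1}\ge c_i-1 \)). Thus the level may strictly decrease only at limit states, and then by exactly one.

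With this invariant in hand the counting is immediate. For each \( v\in\{1,\dots,N\} \) set \( P_v\coloneqq\max\{\,i\mid c_i\ge v\,\} \), which is well defined since \( c_0\ge N \) (as \( \alpha_0\ge\omega.N \)) and satisfies \( P_v<k \) since \( c_k=0 \) (as \( \alpha_k=0 \)). Maximality forces \( c_{P_v}\ge v>c_{P_v+1} \), so the level strictly drops across the step \( P_v\to P_v+1 \); by the invariant this step sits at a limit state with \( c_{P_v}=v \) and \( \alpha_{P_v}=\omega.v \). Hence each \( s_{P_v} \) is a limit state, the \( P_v \) are pairwise distinct (their levels \( v \) differ), and in path order \( P_N<\dots<P_1 \) the annotations \( \omega.N>\dots>\omega \) strictly decrease. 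Feeding these \( N \) limit states into the pigeonhole and single-trace arguments of the first paragraph yields the desired repetition pair, with the earlier (larger-annotation) state as companion.

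The main obstacle is the ordinal bookkeeping underlying the level invariant: verifying through the behaviour of \( \pred \) that the modality-crossing of condition~\ref{d-relevant-nabla-c} can lower the level by at most one and only at a limit annotation, and that within-state unfolding never breaches a threshold \( \omega.v \) from above. The delicate point is that this controlled-decrement analysis relies on the levels remaining finite; once an annotation reaches \( \omega^2 \) the value \( \pred{\omega^2}=0 \) would permit a single step to traverse arbitrarily many thresholds at once, so care is needed to confine the argument to the regime \( \alpha_i<\omega^2 \) in which the \( N \) crossings genuinely occur at distinct limit states.
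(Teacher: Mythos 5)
Your proposal is correct and follows essentially the same route as the paper's (very terse) proof: use \cref{d-relevant} to extract $N$ limit states with strictly decreasing limit annotations along the path, then apply pigeonhole on the $2^{2\card\phi}$ possible pairs $(\Theta_s^-,\Phi_s^-)$ together with the single-relevant-modal-formula property to obtain a repetition pair. The level-tracking analysis of $\pred\alpha$ and the caveat about staying below $\omega^2$ are details the paper leaves implicit, and your treatment of them is sound for the regime in which the proposition is actually applied.
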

\begin{proof}
	\Cref{d-relevant} ensures that on every sufficiently long path in which the relevant part remains non-empty, there are $2^{2 \card \phi }$ limit states with the corresponding limit ordinal strictly decreasing between states.
	As there are \( 2^{\card \phi} \) subsets of \( \FL(\phi) \) the existence of a repetition pair is immediate.
\end{proof}
Consider a state $\Theta_s$ in $\Theta$ with relevant part $\Phi_s$ such that $\nabla \Gamma^\lambda \in \Phi_s$. If $O(\nabla \Gamma,\Theta_s^-)\ge \lambda + \omega$, by definition there is a conservative well-annotation $\Theta'$ such that ${\Theta}^-_{s} = (\Theta'_{\rho})^-$ and $\nabla \Gamma^\beta \in \Theta'_{\rho}$ for some $\beta \ge  \lambda + \omega$. We call \emph{pumping} of $\Theta_s$  the operation of replacing in $\Theta$ the branch rooted at $\Theta_s$ with the conservative well-annotation $\Theta'$. 
\begin{restatable}[First pumping lemma]{lemma}{firstpumping}
\label{pump}
There exists $ N < \omega $ such that for all \( \Gamma \subseteq \FL(X,E) \) and \( x \in X \), if \( O( x, \Gamma ) < \omega^2\) then \( O( x, \Gamma ) \le \omega.N \).
\end{restatable}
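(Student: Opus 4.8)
The plan is to prove the contrapositive in sharp form: with \( N \) the constant furnished by \cref{pairexists}, I would show that \( O(x,\Gamma) > \omega.N \) already forces \( O(x,\Gamma) \ge \omega^2 \). Together with the hypothesis \( O(x,\Gamma) < \omega^2 \) this immediately yields \( O(x,\Gamma) \le \omega.N \). The engine of the argument is a bootstrapping: starting from one near-supremal witness whose root ordinal exceeds \( \omega.N \), I would manufacture by repeated pumping an \( \omega \)-indexed family of conservative well-annotations whose root ordinals are cofinal in \( \omega^2 \), thereby pushing the supremum defining \( O(x,\Gamma) \) up to at least \( \omega^2 \).

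Concretely, I would prove by induction on \( k<\omega \) that there is a conservative well-annotation of a tree with root type \( \Gamma \) in which \( x \) is annotated at the root by an ordinal exceeding \( \omega.(N+k) \). The base case \( k=0 \) is exactly a witness supplied by \( O(x,\Gamma)>\omega.N \). For the step, given such a witness with root annotation \( x^\kappa \), \( \kappa > \omega.(N+k) \), I would first invoke \cref{p-relevant-exist} to pass to a tree carrying a relevant part \( \Phi \) with \( x^\kappa\in\Phi_\rho \) and at most one relevant modal formula per state. From the root I then extract a path \( (s_i)_{i\le m} \) and relevant annotated formulas \( \phi_i^{\alpha_i}\in\Phi_{s_i} \) with \( \phi_0=x \), \( \alpha_0=\kappa\ge\omega.N \), descending to \( \alpha_m=0 \); such a path exists because the strictly decreasing limit-ordinal invariant at limit states (as in the proof of \cref{pairexists}) forces every relevant trace to be finite and to terminate at an annotation \( 0 \). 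Now \cref{pairexists} yields a repetition pair on this path, whose bud is a non-optimal limit state by \cref{optimal-trace-points} together with \cref{d-optimal}.

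At the bud \( t \) we therefore have \( \nabla\Gamma'^\lambda\in\Phi_t \) with \( O(\nabla\Gamma',\Theta_t^-)\ge\lambda+\omega \), so the pumping operation applies: replacing the subtree at \( t \) produces, after re-minimising, a conservative well-annotation in which \( \nabla\Gamma' \) at \( t \) is annotated by some \( \beta\ge\lambda+\omega \). The crucial quantitative point is that this local increase of at least \( \omega \) propagates to the root, so that the annotation of \( x \) at \( \rho \) rises from \( \kappa \) to some \( \kappa'\ge\kappa+\omega > \omega.(N+k)+\omega = \omega.(N+k+1) \), closing the induction. Passing to the supremum then gives \( O(x,\Gamma)\ge\sup_k\omega.(N+k)=\omega^2 \), the desired contradiction.

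I expect the main obstacle to be precisely this propagation step: verifying that a single pump raises the \emph{root} annotation by at least \( \omega \), and not merely strictly. Mere strict increase would permit the sequence \( \kappa,\kappa',\kappa'',\dots \) to converge to a limit strictly below \( \omega^2 \), defeating the cofinality argument; it is the ``\( +\omega \)'' built into optimality (\cref{d-optimal}) that must be shown to survive transport along the relevant trace from \( t \) up to \( \rho \), using the additive behaviour of the annotating ordinals across variable unfoldings and \( \nabla \)-transitions (compare the signature estimate stated just before \cref{CO}). A secondary technical point, needed to license the appeal to \cref{pairexists}, is the construction of the descending path terminating at annotation \( 0 \).
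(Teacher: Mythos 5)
Your overall toolkit (contrapositive, repetition pairs via \cref{pairexists}, pumping at a non-optimal limit state) is the paper's, but the inductive engine you build around it has a genuine gap at exactly the point you flag as the ``main obstacle'': the claim that a single pump at the bud raises the \emph{root} annotation by \( \omega \). This step does not merely need verification --- as stated it fails. In a conservative well-annotation the ordinal attached to \( \nabla \Gamma \) at a state \( s \) is governed jointly by \emph{all} witnessing successors: each \( \gamma \in \Gamma^s_{\sq} \) requires all of \( R[s] \) to reach the relevant stage, and each \( r \in \Gamma^s_{\dia} \) is an independent witness, so the minimal annotation at \( s \) is a minimum over quantities each of which must increase before the parent's annotation moves. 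Pumping the subtree below one successor leaves the parent untouched whenever another successor still witnesses the old, smaller ordinal. This is precisely the phenomenon the paper discusses in the paragraph following \cref{d-relevant} (the \( \Gamma^{\alpha+2} \subseteq \Theta_r \) example), and it is why conditions \ref{d-relevant-nabla-a}--\ref{d-relevant-nabla-c} track every successor rather than a single one. Your bootstrapping \( \kappa \mapsto \kappa + \omega \), iterated to reach \( \omega^2 \), therefore does not get off the ground without a substantial further argument.

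The paper's proof is architected to avoid upward propagation altogether. It assumes the root is optimal and pumps \emph{every} non-optimal successor at each level (not a single bud), observing that at least one successor must remain optimal, since otherwise the root's own annotation would rise, contradicting its optimality; iterating this produces an optimal path that is long and strictly decreasing, so \cref{pairexists} plants a repetition pair on it, contradicting \cref{optimal-trace-points} directly. The conclusion \( O(x,\Gamma) \ge \omega^2 \) is thus obtained by refuting \( \omega.N < O(x,\Gamma) < \omega^2 \) outright, never by transporting a \( +\omega \) increment from a deep node to the root. If you repair your scheme by pumping simultaneously at all witnessing successors along the trace, you will essentially have reconstructed the paper's argument.
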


\begin{proof}[Proof sketch]
  We prove the contrapositive statement, i.e., if \( O( x, \Gamma ) > \omega.N \) then \( O( x, \Gamma ) \ge \omega^2 \) for suitable \( N \). Let $ N =2^{ 2 \card \phi } + 1 $ and assume a formula \( (X,E)\) and a \(\Theta\) such that \(O( x, \Theta_\rho  )=\kappa\) for some \(\omega.N< \kappa < \omega^2\). We prove that the root of such $\Theta$ cannot be optimal, i.e. that \( O(x,\Theta_\rho )>\kappa\), by showing that that would entail the existence of another conservative well-annotation $\Theta'$ with an optimal path and a repetition pair, contradicting \cref{optimal-trace-points}. From the fact that $\Theta_\rho$ is assumed optimal, we proceed by pumping every successor $r\in R[\rho]$ that is not optimal (wrt the unique relevant $\nabla \Gamma_r$). Since by definition there must be at least one optimal $s\in R[\rho]$, we move to all optimal successors and repeat the pumping of their non-optimal successors. The conservative well-annotation $\Theta'$ obtained at the end of this process by definition has an optimal path that is strictly decreasing, hence fulfilling the conditions of \cref{pairexists}, from which we obtain the contradiction with \cref{optimal-trace-points}. It follows that \(O( x, \Theta_\rho ) = \kappa >\omega.N\) entails $\kappa \ge \omega^2$. 
\end{proof}
The first pumping lemma eliminates ordinals sufficiently close to $\omega^2$ as being closure ordinals of \( \Sigma \)-formulas of bounded size.
In the rest of the argument we do the same for ordinals between $\omega^2$ and $\omega_1$.
\begin{restatable}[Second pumping lemma]{lemma}{slump}\label{slump}
	For all \( \kappa < \omega_1 \), \( \Gamma \subseteq \FL(X,E) \) and \( x \in X \), if \( O( x , \Gamma ) \le \kappa \) then \( O( x , \Gamma ) < \omega^2 \).
\end{restatable}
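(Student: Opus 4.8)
The plan is to prove the contrapositive. Since \( O(x,\Gamma) \) is a supremum of countable ordinals it is an ordinal \( \le \omega_1 \), and the stated implication is equivalent to the assertion that \( O(x,\Gamma) \ge \omega^2 \) forces \( O(x,\Gamma) = \omega_1 \): if some conservative well-annotation realises \( x \) with an annotation at least \( \omega^2 \) at the root, then realisable root annotations are cofinal in \( \omega_1 \). So I would fix a tree \( \lts T \), a conservative well-annotation \( \Theta \) with relevant part \( \Phi \) as supplied by \cref{p-relevant-exist} (at most one relevant modal formula per state), with \( x^{\kappa_0} \in \Phi_\rho \), \( \Gamma = \Theta_\rho^- \) and \( \kappa_0 \ge \omega^2 \). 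The goal then becomes the construction, for every countable \( \lambda \), of a conservative well-annotation whose root realises \( x^\kappa \) with \( \Theta_\rho^- = \Gamma \) and \( \kappa \ge \lambda \), whence \( O(x,\Gamma) \ge \lambda \) for all \( \lambda < \omega_1 \).

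The engine of the construction is the substitution of the \cref{reptrace}-style repetition pairs. Because \( \kappa_0 \ge \omega^2 > \omega.N \), the relevant trace from the root is a sufficiently long strictly decreasing path, so \cref{pairexists} yields a repetition pair \( (s,t) \): a companion \( s \) and bud \( t \) joined by a path, with \( (\Theta_s^-,\Phi_s^-) = (\Theta_t^-,\Phi_t^-) \), both limit states, and \( \nabla\Gamma^\alpha \in \Phi_s \), \( \nabla\Gamma^\beta \in \Phi_t \) for some \( \beta < \alpha \). I would define a pumping substitution that detaches the subtree rooted at the bud \( t \) and grafts in its place a fresh copy of the subtree rooted at the companion \( s \). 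The agreement \( \Theta_s^- = \Theta_t^- \) on underlying formulas is exactly what makes the grafted object an annotation of the enlarged tree, and \cref{kozen}(2) re-minimises it to a conservative well-annotation. Replacing the smaller bud annotation \( \beta \) by the larger companion annotation \( \alpha \) along the loop forces the \( \nabla\Gamma \)-annotation at \( s \), and hence at the root, strictly upward; the neighbourhood constraint \cref{d-relevant-nabla-c} guarantees the gain is a full \( \omega^\eta \)-block (with \( \alpha = \pred\alpha + \omega^\eta \)), bounded below by a fixed positive \( \delta \) independent of the stage.

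I would then iterate the substitution transfinitely. At successor stages the substitution is re-applied at the bud copy exposed by the previous graft; at limit stages one passes to the direct limit of the nested graftings and invokes \cref{kozen}(2) (together with \cref{p-conservative}) to extract a single conservative well-annotation whose root annotation dominates the supremum of the preceding stages. Since each stage contributes at least \( \delta \) at the relevant limit level, after \( \lambda \) stages the realised root annotation is \( \ge \delta.\lambda \); as \( \lambda \) ranges over the countable ordinals the realisable annotations become cofinal in \( \omega_1 \), giving \( O(x,\Gamma) = \omega_1 \) and establishing the contrapositive. \cref{conservative-CO} then transfers the conclusion back to the intended statement.

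The hard part is the limit-stage amalgamation: one must verify that the transfinite direct limit of grafted subtrees still carries a \emph{single} conservative well-annotation, with all annotating ordinals genuinely below \( \omega_1 \) and with the root realising the supremum annotation, and that the re-minimisation after each graft does not collapse the ordinal just gained (recall \cref{kozen}(2) only produces a \( \preceq \)-smaller annotation, so one must argue the loop structure keeps the companion ordinal strictly above its previous value). The delicate point underlying this is that the growth must be genuinely unbounded rather than converging, which is exactly where \cref{d-relevant-nabla-c} is essential, since it prevents the ordinal decrements along the pumped loop from telescoping. A related subtlety is that a single repetition pair pumps only additively, so to climb from \( \omega^2.n \) to \( \omega^3 \) and beyond the pumping site must be relocated to successively higher limit levels; the transfinite recursion must therefore supply, at the apex of each newly built gadget, a fresh repetition pair obtained by reapplying \cref{pairexists}.
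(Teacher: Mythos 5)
Your strategy---pass to the contrapositive and show that a root annotation \( \ge \omega^2 \) can be pumped to realise annotations cofinal in \( \omega_1 \)---matches the intent, but your mechanism differs from the paper's, which argues by transfinite induction on \( \kappa \) with the first pumping lemma (\cref{pump}) as base case, applying the induction hypothesis at successors of the root (successor case) or at transfinitely many candidate states located via the relevant part (limit case), and closing with an infinite descent argument. You instead iterate a single graft operation on repetition pairs, and the limit stage as you describe it appears to fail rather than being merely delicate. Nesting the grafts---each substitution performed at the bud copy exposed inside the previously grafted subtree---produces, in the direct limit, an infinite branch along which the relevant trace of \( x \) regenerates at every loop traversal and never reaches annotation \( 0 \). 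For a least fixed point variable this threatens satisfaction of \( x \) at the root of the limit tree altogether, in which case there is no conservative well-annotation realising the supremum and \cref{kozen} together with \cref{p-conservative} has nothing to extract. Constructions that genuinely realise limit ordinals (Czarnecki's, for instance) place the additional loop traversals on infinitely many \emph{sibling} copies below a common modal predecessor, keeping every relevant descent well-founded; that is a different construction from the one you outline, and it is precisely what the paper's limit case is organising.

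The second gap is that the claim that each graft forces the minimal annotation at the companion, and hence at the root, upward by a fixed positive \( \delta \) is asserted but not argued, and it is the mathematical heart of the lemma. The conservative well-annotation of the grafted tree is determined by the semantics of that tree (\cref{p-conservative}); equality of \( \Theta_s^- \) and \( \Theta_t^- \) makes the graft semantically coherent but does not by itself show that the least \( \alpha \) with the root in \( \den{x^\alpha} \) has increased---re-minimisation could in principle route the witnessing descent around the new loop, which is exactly the collapse you flag but do not rule out. Relatedly, reapplying \cref{pairexists} ``at the apex of each newly built gadget'' is circular: its hypotheses require a relevant path starting at annotation \( \ge \omega.N \) in the \emph{new} conservative well-annotation, which presupposes the very increase being established. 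These are the points the paper's infinite descent argument is designed to discharge; without them the transfinite iteration does not get past its first few stages.
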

\begin{proof}[Proof sketch]
    The proof proceeds by transfinite induction on \( \kappa \) and is similar in spirit to the first pumping lemma.
This second lemma doesn't rely directly on optimality because a greater generality is needed, but \cref{pump} serves as base case in the argument for every $\kappa \ge \omega^2$.
Given a conservative well-annotation \( \Theta \) with \( x^\kappa \in \Theta_\rho \) for \( \kappa \ge \omega^2 \) a series of substitutions to the underlying tree induces a tree that satisfies \( x \) at the root and for which all well-annotations necessitate a strictly larger annotation of this variable.
In the case that \( \kappa > \omega^2 \) is a successor ordinal, the substitutions can be performed directly to the successors of the root by appealing to the induction hypothesis.
The case of a limit ordinal is more involved and requires identifying, via the relevant part of \( \Theta \), transfinitely many candidate states at which the induction hypothesis can be applied.
After performing the substitutions, an infinite descent argument establishes a necessary increase in the ordinal annotation.
\end{proof}
\begin{theorem}
	A countable ordinal \( \alpha \) is the closure ordinal of a formula in the \( \Sigma \)-fragment iff \( \alpha < \omega^2 \).
\end{theorem}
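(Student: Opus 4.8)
The statement is a biconditional. Its right-to-left direction is exactly \cref{Czarnecki}: Czarnecki's formulas realise every $\alpha < \omega^2$ as the closure ordinal of a $\Sigma$-formula, so nothing further is needed there. The plan for the converse is to show that no $\Sigma$-formula can have a closure ordinal in the interval $[\omega^2,\omega_1)$. First I would apply \cref{t-conj-equiv} to replace an arbitrary $\Sigma$-formula $\phi$ by a conjunctive formula $\phi_c = (X, x_0, E)$ satisfying $\CO{\phi} \le \CO{\phi_c}$, and $\CO{\phi_c} < \omega_1$ whenever $\CO{\phi} < \omega_1$. Since the closure ordinal we are handed is by hypothesis countable, it suffices to prove $\CO{\phi_c} < \omega^2$, and over conjunctive systems the entire apparatus of conservative well-annotations and the pumping lemmas is available.

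The central step is to identify $\CO{\phi_c}$ with the ordinal invariant $O$ of \cref{d-ordinal-ann}. On one side, \cref{conservative-CO} gives that every $\kappa$ witnessing $O(x_0,\Gamma)$, i.e.\ occurring as $x_0^\kappa$ in a conservative well-annotation, satisfies $\kappa \le \CO{x_0} = \CO{\phi_c}$; hence $O(x_0,\Gamma) \le \CO{\phi_c}$ for every $\Gamma$. For the reverse inequality I would use \cref{p-conservative}: for a countable frame $\S$ and a state $s \in \den{x_0}^\S$, the canonical conservative well-annotation annotates $x_0$ at $s$ by the least stage at which $s$ enters the approximation of $x_0$, and $\CO[\S]{\phi_c}$ is read off as the supremum of these stages over $s$. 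Unravelling $\S$ at $s$ into a tree preserves denotations, and hence these stages, by bisimulation invariance, so each stage is witnessed at the root of a tree and is therefore bounded by some $O(x_0,\Gamma)$. Because $\FL(X,E)$ is finite there are only finitely many candidate $\Gamma$, so the supremum over $\Gamma$ is attained and we obtain $\CO{\phi_c} = \max_\Gamma O(x_0,\Gamma) = O(x_0,\Gamma^*)$ for a fixed $\Gamma^*$.

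It then remains to feed this into the pumping analysis. By the inequality of the previous paragraph, $O(x_0,\Gamma^*) = \CO{\phi_c} < \omega_1$ is countable, so $O(x_0,\Gamma^*) \le \kappa$ for $\kappa := \CO{\phi_c}$. \cref{slump} now forces $O(x_0,\Gamma^*) < \omega^2$, whence $\CO{\phi} \le \CO{\phi_c} = O(x_0,\Gamma^*) < \omega^2$. Note that \cref{pump} is not invoked directly at this point; it enters only as the base case of the transfinite induction inside \cref{slump}. Combined with \cref{Czarnecki}, this establishes the theorem.

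I expect the main obstacle to be the correspondence $\CO{\phi_c} = \max_\Gamma O(x_0,\Gamma)$ of the middle paragraph, rather than the pumping, which the lemmas already discharge. The delicate points are: justifying the passage from arbitrary countable frames to trees (so that the tree-based definition of $O$ applies) via bisimulation invariance of both the denotation and the rank function; verifying that the annotation of \cref{p-conservative} indeed assigns each state its rank, so that $\CO[\S]{\phi_c}$ is a supremum of root-annotations; and exploiting finiteness of $\FL(X,E)$ to convert that supremum into a single countable value $O(x_0,\Gamma^*)$ to which \cref{slump} applies with $\kappa = \CO{\phi_c}$. Any off-by-one between $\CO[\S]{\phi_c}$ and the supremum of ranks is harmless, since $\omega^2$ is a limit ordinal and $\beta + 1 < \omega^2$ exactly when $\beta < \omega^2$; everything else is routine bookkeeping around the two inequalities.
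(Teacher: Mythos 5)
Your proposal is correct and follows essentially the same route as the paper: \cref{Czarnecki} for the right-to-left direction, then \cref{t-conj-equiv} to pass to a conjunctive formula, the identification of \( \CO{\phi_c} \) with \( O(x_0,\Gamma) \) for a suitable \( \Gamma \subseteq \FL(X,E) \), and finally \cref{slump}. The paper simply asserts the correspondence between the closure ordinal and the invariant \( O \); your middle paragraph (unravelling to trees, \cref{p-conservative} and \cref{conservative-CO} for the two inequalities, finiteness of the closure) supplies the justification the paper leaves implicit.
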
  
\begin{proof}
	One direction is provided by \cref{Czarnecki}. 
	For the other direction, let \( \phi \) be a \( \Sigma \)-formula with \( \CO \phi < \omega_1 \). 
	By \cref{t-conj-equiv} we may assume \( \phi \) is conjunctive.
	Thus, given \( x \) the initial variable of \( \phi \), we have \(\CO\phi=\alpha \) entails \(O( x , \Delta ) = \alpha \) for some $\Delta \subseteq \FL(X,E)$,	whence \( \CO \phi < \omega^2 \) by \cref{slump}.
\end{proof}

\section{Conclusion}\label{conclusion}
 \enlargethispage{\baselineskip}
We have shown that the countable closure ordinals of formulas in the $\Sigma$-fragment are strictly bounded by $\omega^2$.
The result extends what was claimed in \cite{AfshariL13} and is obtained using a different method that circumvents the shortcomings of the approach taken there. 
The main ingredient introduced is a reworked version of well-annotation from \cite{Koz88}, with which the focus is cast directly on transforming frames.

The machinery described in this article is for most part independent of the $\Sigma$-fragment to which they are applied. 
An immediate continuation of this work is, therefore, to examine the versatility of the tools for investigating closure ordinal of $\mu x\,\phi$ where $\phi$ is any $\mu$-calculus formula.
Another research direction, suggested by some of the insights from Section \ref{sec:CO}, is  to directly study the relation between closure ordinals and semantic equivalence, i.e., the syntactic operations which preserve closure ordinals.

\bibliographystyle{eptcs}
\bibliography{FICS.bib}

\end{document}